\documentclass[11pt]{article}
\usepackage[hidelinks]{hyperref}
\usepackage{amsmath,amssymb,amsthm,graphicx}
\usepackage{algorithm}
\usepackage{algpseudocode}

\usepackage[dvipsnames]{xcolor}
\usepackage{algpseudocode}
\usepackage{float,subfig}
\usepackage{enumitem}
\usepackage{caption}
\usepackage{tikz}
\usetikzlibrary{calc, decorations.pathmorphing}
\usepackage{pgfplots}
\pgfplotsset{scaled y ticks = false, compat = newest, legend={down then right}}

\makeatletter
\pgfplotsset{
  every axis x label/.append style={
    alias=current axis xlabel
  },
  legend pos/outer south/.style={
    /pgfplots/legend style={
      at={%
        (%
        \@ifundefined{pgf@sh@ns@current axis xlabel}%
        {xticklabel cs:0.5}%
        {current axis xlabel.south}%
        )%
      },
      anchor=north
    }
  }
}
\makeatother

\usepackage[textsize=small]{todonotes}

\usepackage[yyyymmdd]{datetime}

\usepackage[margin=36pt]{caption}

\setlength{\parskip}{0pt}
\setlength{\parindent}{12pt}

\newtheorem{claim}{Claim}
\newtheorem{lemma}{Lemma}

\theoremstyle{definition}

\usepackage[a4paper, total={6.2in, 9.5in}]{geometry}

\usepackage{titling}

\settowidth{\thanksmarkwidth}{*}
\setlength{\thanksmargin}{-\thanksmarkwidth}

\title{Efficient Algorithms and Implementations for Extracting Maximum-Size $(k,\ell)$-Sparse Subgraphs 
}
\date{\vspace*{-12pt}}

\author{
  P\'eter Madarasi\thanks{HUN-REN Alfr\'{e}d R\'{e}nyi Institute of Mathematics, and Department of Operations Research, E{\"o}tv{\"o}s Lor{\'a}nd University, Budapest, Hungary. E-mail: \texttt{madarasip@staff.elte.hu}}
}

\begin{document}

\maketitle

\begin{abstract}
  A multigraph $G = (V, E)$ is $(k, \ell)$-sparse if every subset $X \subseteq V$ induces at most $\max\{k|X| - \ell, 0\}$ edges.
  Finding a maximum-size $(k, \ell)$-sparse subgraph is a classical problem in rigidity theory and combinatorial optimization, with known polynomial-time algorithms.
  This paper presents a highly efficient and flexible implementation of an augmenting path method, enhanced with a range of powerful practical heuristics that significantly reduce running time while preserving optimality.
  These heuristics --- including edge-ordering, node-ordering, two-phase strategies, and pseudoforest-based initialization --- steer the algorithm toward accepting more edges early in the execution and avoiding costly augmentations.
  A comprehensive experimental evaluation on both synthetic and real-world graphs demonstrates that our implementation outperforms existing tools by several orders of magnitude.
  We also propose an asymptotically faster algorithm for extracting an inclusion-wise maximal $(k,2k)$-sparse subgraph with the sparsity condition required only for node sets of size at least three, which is particularly relevant to 3D rigidity when $k = 3$.
  We provide a carefully engineered implementation, which is publicly available online and is proposed for inclusion in the LEMON graph library.

  \medskip
  \noindent\textbf{Keywords:} $(k,\ell)$-sparse graphs, pebble game algorithms, graph optimization, augmenting paths, heuristics
\end{abstract}

\vspace{12pt}
\section{Introduction}

Throughout this paper, let $k$ and $\ell$ be non-negative integers with $\ell \leq 2k$, which we treat as constants.
Let $G = (V, E)$ be a multigraph with $n = |V|$ nodes and $m = |E|$ edges.
For $0 \leq \ell < 2k$, the graph $G = (V, E)$ is called \emph{$(k, \ell)$-sparse} if, for each subset $X \subseteq V$, the number of edges induced by $X$ is at most $\max\{k|X| - \ell, 0\}$.
For $\ell = 2k$, this condition is required only for node sets of size at least three.
Furthermore, if $G$ is $(k, \ell)$-sparse and has exactly $\max\{k |V| - \ell, 0\}$ edges, then we say that $G$ is \emph{$(k, \ell)$-tight}.
A graph is \emph{$(k, \ell)$-spanning} if it contains a $(k, \ell)$-tight subgraph that spans the entire node set.
A \emph{$(k, \ell)$-block} of a $(k, \ell)$-sparse graph is a subset $X \subseteq V$ that induces a $(k, \ell)$-tight subgraph.
A \emph{$(k, \ell)$-component} is an inclusion-wise maximal $(k, \ell)$-block.

The notion of $(k, \ell)$-sparsity plays a central role in rigidity theory, where it serves as a combinatorial criterion for determining whether a graph is rigid or flexible in various geometric settings.
Beyond rigidity, $(k, \ell)$-sparsity also arises in combinatorial optimization --- for example, a graph is $(k, k)$-sparse if and only if its edge set can be partitioned into $k$ edge-disjoint forests.
As such, the development of efficient algorithms for testing sparsity and extracting maximum-size $(k, \ell)$-sparse subgraphs is of critical importance in both theoretical and applied contexts.

\paragraph{Historical overview}
The concept of $(k, \ell)$-sparse graphs was first introduced in 1979 by Lorea~\cite{lorea} as part of his work on matroidal families.
Since then, sparse graphs have been the focus of extensive research, with diverse applications emerging in areas such as rigidity theory, matroid theory, network design, and combinatorial optimization.
For example, $(k, k)$-tight graphs appeared in the work of Nash-Williams~\cite{nash} and Tutte~\cite{tutte} as a characterization of graphs that can be decomposed into $k$ edge-disjoint spanning trees.
A notable result in rigidity theory is due to Laman~\cite{laman}, who showed that $(2, 3)$-tight graphs are the generic minimally rigid graphs for bar--joint frameworks in the plane, while $(2, 3)$-spanning graphs are the rigid ones.
The problem of testing rigidity in 3D frameworks, however, remains open.
For further details on rigidity theory, see~\cite{jordan2016combinatorial,schulze2017rigidity}.

\paragraph{Previous work}
The classical algorithm for finding $(k, \ell)$-sparse subgraphs is based on augmenting paths and Hakimi’s Orientation Lemma~\cite{SLHOrientationLemma}.
Using this approach, one can compute a maximum-weight $(k, \ell)$-sparse subgraph in $O(nm)$ time for all $0 \leq \ell < 2k$~\cite{bergPhD, berg2003algorithms, hendrickson}.
For the maximum-size case, this algorithm can be further optimized to run in $O(n^2 + m)$ time with only $O(n)$ space~\cite{bergPhD, berg2003algorithms, hendrickson}.
A more advanced variant achieves the same running time for the weighted case, at the cost of using $O(n^2)$ space~\cite{deak2025quadratic}.
These algorithms, commonly known as pebble game algorithms~\cite{pebble, pebbleDS}, are central to rigidity theory and underpin a broad class of combinatorial algorithms.
Additional results have been established for special cases: when $\ell < 0$, the maximum-size $(k, \ell)$-sparse subgraph problem remains solvable in polynomial time~\cite{negativel}; for recognizing $(2, 3)$-sparse graphs, an $O(n \sqrt{n \log n})$-time algorithm is known~\cite{daescu2009towards}; and for recognizing planar $(2, 3)$-tight graphs, there exists an $O(n \log^3 n)$-time algorithm~\cite{rollin2019nlog3n}.
Despite these advances, no algorithms faster than the augmenting path method are known for the full range $0 \leq \ell < 2k$.
As for practical implementations, KINARI-Web~\cite{kinariArticle, kinari} --- developed by Fox, Jagodzinski, Li, and Streinu --- is, to the best of our knowledge, the only efficient (closed-source \texttt{C++}) tool for computing both maximum-size and maximum-weight $(k, \ell)$-sparse subgraphs.
Another closely related \texttt{C++} implementation by Cs.\ Kir\'aly and Mih\'alyk\'o~\cite{rigidityAugm} is capable of making a $(k, \ell)$-tight (hyper)graph $(k, \ell)$-redundant and can also find a maximum-size $(k, \ell)$-sparse subgraph.

\paragraph{Our contribution}
We present an efficient and flexible \texttt{C++} implementation of several variants of the augmenting path algorithm for computing maximum-size $(k, \ell)$-sparse subgraphs.
Our implementation supports both weighted and unweighted input graphs and includes optimized routines for extracting sparse subgraphs, checking sparsity, and identifying $(k,\ell)$-components.
We conduct an extensive computational study comparing our implementation for solving the maximum-size $(k, \ell)$-sparse subgraph problem to two existing tools: the KINARI-Web library~\cite{kinari} and a tool by Cs.\ Kir\'aly and Mih\'alyk\'o~\cite{rigidityAugm}.
The experiments span a wide variety of synthetic and real-world graphs, including random, molecular, and protein graphs.
The results show that our implementation consistently outperforms existing solutions by several orders of magnitude.
To improve practical performance further in the unweighted case, we introduce a collection of edge-ordering heuristics that fine-tune algorithmic behavior in different input scenarios, as first proposed in~\cite{madarasi2023klSparse}.
These heuristics significantly reduce the running time in practice while preserving the optimality guarantees of the underlying algorithm.
Our implementation is publicly available online~\cite{githubSparse} and is proposed for inclusion in the open-source LEMON library~\cite{lemonCikk, lemon}, providing a fast, scalable, and open-source tool for rigidity theory and related applications.
Additionally, we propose an asymptotically faster algorithm that computes an inclusion-wise maximal $(k,2k)$-sparse subgraph, where sparsity is only required for node sets of size at least three.
This case is especially relevant in 3D rigidity theory, particularly for analyzing block--hole graphs with a single block~\cite{jordan2023RigidBlockAndHoleGraphs}, and yields a necessary condition for 3D rigidity when $k = 3$.

\bigskip
The rest of this paper is organized as follows.
Section~\ref{sec:preliminaries} introduces the core algorithmic framework, namely the augmenting path method, which serves as the foundation for our subsequent developments.
We define the scope of the problem, describe the main algorithmic ideas, and discuss complexity bounds and optimization opportunities.
In Section~\ref{sec:l2k}, we propose a new algorithm for the special case $\ell = 2k$, where sparsity is required only for node sets of size at least three.
Section~\ref{sec:implementation} describes our implementation in detail.
We present our practical improvements over the basic augmenting path algorithm, including an early-termination condition and other performance-critical enhancements.
In Section~\ref{sec:heuristics}, we propose a range of heuristics for the processing order of the edges, which significantly influence the practical efficiency of the algorithms.
Section~\ref{sec:benchmark} presents a comprehensive empirical evaluation of our implementation.
We benchmark the performance of different heuristics on various families of random graphs and real-world molecular and protein graphs.
We also compare our results against two existing tools.
Finally, Section~\ref{sec:conclusion} summarizes our contributions and outlines directions for future research.

\section{The augmenting path algorithm}\label{sec:preliminaries}

In this paper, we focus on the following three problems:
\begin{enumerate}
\item \emph{Decision.} Decide whether a graph is $(k, \ell)$-sparse, $(k, \ell)$-tight, $(k, \ell)$-spanning, or none of these.
\item \emph{Extraction.} Extract a maximum-size $(k, \ell)$-sparse subgraph from a given graph.
\item \emph{Components.} Find all $(k, \ell)$-components of a $(k, \ell)$-sparse graph.
\end{enumerate}

We primarily restrict ourselves to the natural range $0 \leq \ell < 2k$, but we also consider the problem of extracting an inclusion-wise maximal $(k, \ell)$-sparse subgraph when $\ell = 2k$ (with $k > 0$).

\medskip
The augmenting path algorithm for finding a maximum-size $(k, \ell)$-sparse subgraph processes the edges of the input graph one by one, accepting or rejecting each edge.
During the execution of the algorithm, an inner digraph $D$ is maintained, which is initialized as an empty graph on the same node set as $G$.
An edge $uv$ is accepted for inclusion in the sparse subgraph if and only if adding $uv$ to the undirected version of $D$ keeps the graph sparse.
To verify this, the algorithm attempts to reorient $D$ such that the sum of the indegrees of $u$ and $v$ is strictly smaller than $(2k - \ell)$, while maintaining that the indegrees of all other nodes in $D$ are at most $k$.
By the Orientation Lemma~\cite{SLHOrientationLemma}, such an orientation exists if and only if the edge $uv$ can be accepted.

If the number of arcs entering $u$ and $v$ is at least $(2k - \ell)$, then the algorithm attempts to reverse a path starting from a node with indegree smaller than $k$ and ending at either $u$ or $v$, thereby decreasing the number of incoming arcs to $u$ or $v$.
This process terminates after at most $\ell + 1$ path reversals, either when a proper orientation is found (and $uv$ is accepted) or when no further reversible paths exist (and $uv$ is rejected).
Whenever an edge $uv$ is accepted, it is inserted into $D$ and oriented such that the indegrees of $u$ and $v$ remain at most $k$.
It is easy to see that the algorithm terminates after $O(nm)$ steps.
For a more detailed description of this algorithm, the reader is referred to~\cite{deak2025quadratic}.

For $0 \leq \ell < 2k$, the collection of $(k, \ell)$-sparse edge sets forms the family of independent sets of a matroid~\cite{lorea}.
Consequently, the algorithm identifies a maximum-size $(k, \ell)$-sparse subgraph regardless of the order in which the edges are considered.
Moreover, when the edges are processed in non-increasing order of weight, the same greedy approach yields a maximum-weight $(k, \ell)$-sparse subgraph.
In contrast, when $\ell = 2k$, the $(k, \ell)$-sparse edge sets do not form a matroid, and the greedy strategy is no longer guaranteed to be optimal.
In this case, the algorithm produces only an inclusion-wise maximal $(k, \ell)$-sparse subgraph, which may be suboptimal in terms of both size and weight.

\medskip
For $0 \leq \ell < 2k$, the algorithm can be sped up by maintaining the $(k, \ell)$-components formed by the accepted edges.
It is easy to prove that any two $(k, \ell)$-components are disjoint when $0 \leq \ell \leq k$, and may intersect in at most one node when $k < \ell < 2k$.
Using this observation, the $(k, \ell)$-components can be represented efficiently, allowing the algorithm to reject an edge in constant time by checking if it is already induced by a $(k, \ell)$-component in the underlying undirected graph of the inner digraph.
Accepting an edge takes linear time in the number of nodes.
This optimization improves the running time of the algorithm to $O(n^2 + m)$.
For a detailed description of the optimized algorithm, the reader is referred to~\cite{deak2025quadratic}.
It is important to note that the component-based algorithm requires the edges to be processed in a specific order when $k < \ell$.
In this case, the edges are processed by iterating over the node set and considering all unprocessed edges incident to the current node.
This special processing order complicates the extension of this idea to the weighted case when $k < \ell$, since the weight function dictates the processing order of the edges.
An $O(n^2 + m)$-time algorithm that supports arbitrary edge order in the weighted case is described in~\cite{deak2025quadratic}.

\section{An improved algorithm for the case $\ell = 2k$}\label{sec:l2k}

In this section, we first discuss an $O(n^2 m)$-time algorithm for extracting an inclusion-wise maximal $(k, 2k)$-sparse subgraph, followed by an improved algorithm that runs in $O(n m)$ time. 
Recall that for $\ell = 2k$, we only require the sparsity condition for node subsets of size at least three, since if the condition were imposed on all subsets, then the only $(k, 2k)$-sparse graphs would be those without any edges.
This is in line with applications, particularly in 3D rigidity theory, where for $\ell = 2k$ the sparsity condition is imposed only on subsets of size at least three.
In contrast to the general multigraph setting of the previous sections, here we assume that $G$ is simple.
We begin with the following lemma.
\begin{lemma}\label{lem:l2klemma}
  Let $G=(V,E)$ be a $(k,2k)$-sparse simple graph and let $u,v\in V$ be two of its nodes.
  Assume that there is no edge between $u$ and $v$.
  Let $D$ be an orientation of $G$ in which all indegrees are at most $k$, and the indegrees of $u$ and $v$ are zero.
  Then, $G + uv$ is $(k,2k)$-sparse if and only if, for each $w \in V\setminus\{u,v\}$, there exists a directed path to $w$ from some node distinct from $u$ and $v$ with indegree smaller than $k$.
\end{lemma}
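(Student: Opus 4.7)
The plan is to translate the lemma into the language of the orientation $D$ and reduce the sparsity condition for $G+uv$ to a statement about reachability in $D$.

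First, I would record a structural characterization of tight sets containing $u$ and $v$. Since $d^-(u)=d^-(v)=0$, for any $X\subseteq V$ with $\{u,v\}\subseteq X$ we have
\[
|E_G[X]| \;=\; \sum_{x\in X} d_X^-(x) \;=\; \sum_{x\in X\setminus\{u,v\}} d_X^-(x) \;\le\; k(|X|-2),
\]
using $d_X^-(x)\le d^-(x)\le k$. Hence $X$ induces exactly $k|X|-2k$ edges (i.e., $X$ is $(k,2k)$-tight) if and only if every $x\in X\setminus\{u,v\}$ satisfies $d^-(x)=k$ and has all of its in-neighbours in $D$ lying inside $X$. Moreover, since $G$ itself is already $(k,2k)$-sparse, adding $uv$ violates $(k,2k)$-sparsity exactly when some $X$ with $\{u,v\}\subseteq X$ and $|X|\ge 3$ is $(k,2k)$-tight. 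This reduces the lemma to proving that such a tight $X$ fails to exist iff the stated reachability condition holds.

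For the ``if'' direction, I would argue by contradiction: assume the reachability condition and suppose $X$ is a tight set with $\{u,v\}\subseteq X$, $|X|\ge 3$. Pick any $w\in X\setminus\{u,v\}$ and a directed path $s=p_0\to p_1\to\cdots\to p_t=w$ from some $s\ne u,v$ with $d^-(s)<k$. Every internal and terminal node of the path has positive indegree in $D$, so none of $p_1,\dots,p_t$ equals $u$ or $v$; by hypothesis neither does $p_0=s$. Hence all $p_i$ lie in $V\setminus\{u,v\}$, and traversing the path backwards from $w$ I can use the tightness characterization iteratively (each predecessor of a node in $X\setminus\{u,v\}$ lies in $X$) to conclude $s\in X\setminus\{u,v\}$. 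But then tightness forces $d^-(s)=k$, contradicting $d^-(s)<k$.

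For the ``only if'' direction, I would argue the contrapositive: assume some $w^*\in V\setminus\{u,v\}$ is not reachable from any node $s\ne u,v$ with $d^-(s)<k$, and construct a tight $X$ containing $\{u,v,w^*\}$. Define
\[
X \;=\; \{u,v\}\cup\{x\in V : \text{there is a directed }x\text{-to-}w^*\text{ path in }D\},
\]
including $w^*$ itself via the trivial path. Then $|X|\ge 3$ and $\{u,v\}\subseteq X$. For each $x\in X\setminus\{u,v\}$, $x$ reaches $w^*$, so the no-path assumption forces $d^-(x)=k$; and if $y\to x$ in $D$ then $y$ also reaches $w^*$, so either $y\in\{u,v\}$ or $y$ is an ancestor of $w^*$, and in either case $y\in X$. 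By the structural characterization above, $X$ is $(k,2k)$-tight, so adding $uv$ violates sparsity on $X$.

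The main obstacle is the bookkeeping in the ``if'' direction: one must verify that the directed path witnessing reachability stays entirely outside $\{u,v\}$ so that the tightness inheritance argument can be iterated all the way back to the starting node $s$. This is where the hypothesis $d^-(u)=d^-(v)=0$ (combined with the explicit requirement $s\ne u,v$) plays the essential role and prevents the inductive trace from being interrupted.
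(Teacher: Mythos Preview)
Your proof is correct and follows essentially the same approach as the paper: both directions hinge on identifying $(k,2k)$-tight sets containing $\{u,v\}$ with sets that are closed under taking in-neighbours and in which every node other than $u,v$ has full indegree $k$, and your construction of $X$ from the ancestor set of $w^*$ in the ``only if'' direction is identical to the paper's. The one minor variation is in the ``if'' direction, where the paper reverses the witnessing path to force $\varrho(w)<k$ and then bounds $i(X)$ strictly, whereas you trace the path backward through the tight-set closure to derive $d^-(s)=k$; these are equivalent manoeuvres, and your version is arguably cleaner since it makes explicit the check that the path avoids $u$ and $v$.
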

\begin{proof}
  Since $G$ is $(k,2k)$-sparse, $G$ can always be oriented in such a way that the indegree of each node is at most $k$, and the indegrees of $u$ and $v$ are zero by the Orientation Lemma~\cite{SLHOrientationLemma}.

  We now prove the lemma.
  First, suppose that there exists a path to each node in $V\setminus\{u,v\}$ from a node distinct from $u$ and $v$ with indegree smaller than $k$.
  For a subset $X$ of the nodes containing $u,v$ and a third node $w$, take a path to $w$ from a node distinct from $u$ and $v$ with indegree smaller than $k$, and reverse it.
  Since the indegree of every node remains at most $k$, the indegrees of $u$ and $v$ are zero, and the indegree of $w$ is smaller than $k$, we get that
  \[
    i(X) \leq \sum_{v\in X}\varrho(v) < (|X|-2)k = k|X|-2k,
  \]
  where $i(X)$ is the number of edges induced by $X$, and $\varrho(v)$ is the indegree of the node $v$.
  This means that $X$ is not tight, therefore $G+uv$ is $(k, 2k)$-sparse.

  Second, suppose that for a node $w$, there exist no paths to $w$ from any nodes with indegree smaller than $k$ distinct from $u$ and $v$.
  Let $R$ denote the set of nodes from which $w$ is reachable in $D$.
  We prove that $X = R \cup \{u, v\}$ is a tight set, hence it prevents the insertion of the edge $uv$.
  Observe that the indegree of every node in $R$ is exactly $k$, while the indegrees of $u$ and $v$ are zero.
  Therefore,
  \[
    i(X) = \sum_{v\in X}\varrho(v) = (|X|-2)k = k|X|-2k,
  \]
  hence $G+uv$ is not $(k,2k)$-sparse, which was to be shown.
\end{proof}

Note that this proof yields a straightforward algorithm (also proposed by Cs.\ Király) which runs in $O(n^2m)$ time.

\medskip

Now, we present an improved version of this algorithm for extracting an inclusion-wise maximal $(k,2k)$-sparse subgraph running in $O(nm)$ time, also based on Lemma~\ref{lem:l2klemma}. 
To achieve this, instead of traversing the graph $n$ times to process an edge, we execute only one BFS from the nodes with indegree smaller than $k$ in an orientation $D$ of $G$ that we maintain during the algorithm and check whether all nodes are reached.

The detailed description of the algorithm is the following.


\begin{algorithm}[H]
  \addtolength\linewidth{-8ex}
  \caption{Extract an inclusion-wise maximal $(k,2k)$-sparse subgraph, improved version}\label{alg:l2kimproved}
  \begin{algorithmic}[1]
  \item[] \textbf{Input:} A simple graph $G=(V,E)$ on at least three nodes and a positive integer~$k$.
  \item[] \textbf{Output:} An inclusion-wise maximal $(k,2k)$-sparse subgraph of $G$.
  \medskip
  \item[] 
    Construct a new digraph $D$ on the node set $V$ without any arcs.
    Then, process each edge $uv$ in an arbitrary order as follows.

    \noindent Reorient $D$ such that the indegrees of $u$ and $v$ are zero, and the indegrees of the rest of the nodes remain at most $k$, which takes at most $2k$ path reversals.
    Run a BFS in $D$ from the nodes with indegree smaller than $k$, distinct from $u$ and $v$.

    \begin{itemize}
    \item If every node in $V\setminus\{u,v\}$ is reached, then accept the edge $uv$, and insert it into $D$ with arbitrary orientation.
    \item Otherwise, there exists a node $w$ that was not reached, and hence the edge $uv$ cannot be inserted by Lemma~\ref{lem:l2klemma}.
    \end{itemize}
    After all edges are processed, output the set of accepted edges.
  \end{algorithmic}
\end{algorithm}

\paragraph{Complexity}
One reorientation requires at most $2k$ path reversals in $O(n)$ time, and the further BFS calls take $O(n)$ time for each edge. 
Hence, the algorithm takes $O(nm)$ steps in total.

\section{Implementation details and practical enhancements}\label{sec:implementation}

In this section, we describe the algorithms we implemented and highlight practical enhancements designed to improve their performance.
The implementation of all the algorithms presented here is available online~\cite{githubSparse}.
The algorithms implemented include:
\begin{enumerate}\itemsep-2pt
\item Basic augmenting path algorithm~\cite{pebble} for the maximum-weight problem for the range $0 \leq \ell < 2k$, with several practical improvements.
\item Algorithm~\ref{alg:l2kimproved} for extracting an inclusion-wise maximal $(k, 2k)$-sparse subgraph.
\item $(k, \ell)$-component-based augmenting path algorithm for the maximum-size problem for $0 \leq \ell < 2k$ as described in~\cite{deak2025quadratic}.
\item $(k, \ell)$-component-based augmenting path algorithm for the weighted case as described in~\cite{deak2025quadratic}.
\end{enumerate}
In this paper, we focus primarily on the maximum-size $(k, \ell)$-sparse subgraph problem.

All our algorithms are designed with a user-friendly interface that allows for step-by-step execution control.
Additionally, query functions provide access to all relevant information generated during the execution of the algorithms.
For more details, please refer to the documentation of the implementation~\cite{githubSparse}.

We now highlight three key performance enhancements in our implementation of the augmenting path algorithms for the maximum-size $(k, \ell)$-sparse subgraph problem.

\paragraph{Early-termination condition}
One crucial optimization in our implementation is the early-termination condition.
Once the total number of arcs in the inner digraph $D$ reaches $(k|V| - \ell)$, no further edges can be accepted because the underlying undirected graph is $(k, \ell)$-tight.
This condition significantly reduces the running time, particularly in dense graphs, since a maximum-size sparse subgraph is often found after processing only a small portion of the edges.

\paragraph{Breadth-first search}
Repeated graph traversal, especially breadth-first search (BFS), is a critical subroutine in the augmenting path algorithm.
In the standard BFS implementation within the LEMON library, the algorithm iterates over all nodes and reinitializes their visited status during each execution.
This can lead to redundant operations, especially because the algorithm is run multiple times on the same graph and only a small subset of nodes is typically explored.
In our enhanced version, we modified the BFS to reinitialize only the nodes that were visited, thus avoiding unnecessary iterations over irrelevant nodes.

\paragraph{Processing order of the edges} The most impactful insight that enhances the performance of our implementation is the ability to control the order in which the edges are processed.
For instance, when testing $(1, 1)$-sparsity on a path, processing the edges sequentially from one endpoint to the other results in a linear-time algorithm.
In contrast, processing the edges in an alternating fashion from the center gives a quadratic-time algorithm.
This demonstrates that we can achieve asymptotic speedups by carefully selecting the order in which edges are processed.
As we will discuss in the next sections, leveraging this flexibility allows the algorithm to achieve better performance in practice.

\section{Edge-ordering heuristics}\label{sec:heuristics}

In this section, we explore heuristics for determining the processing order of the edges in the augmenting path algorithm for the maximum-size $(k, \ell)$-sparse subgraph problem.
The most time-consuming operation in the algorithm is finding reversible paths.
Therefore, a natural strategy is to prioritize edges whose insertion requires the fewest path reversals.
Minimizing the number of reversals also means that edges that are ``most likely'' to be accepted are processed first, thus increasing the chances of meeting the early-termination condition after processing only a few edges.
Furthermore, since the orientation of each accepted edge can be freely chosen, we also consider different rules for orienting accepted edges.
For each heuristic, we describe the most efficient orientation strategy to complement the edge-ordering approach.

\subsubsection*{Edge orders}
Each heuristic in this section selects the next edge to be processed, removes it from the input graph, and inserts it into the inner digraph if it is accepted.
Once accepted, the selected edge is oriented according to the specific rule associated with the heuristic.
\begin{enumerate}[leftmargin=*]
\item[] \textsc{Basic}:
  Edges are processed in a random order as specified by the graph representation.
  The accepted edges are inserted into the inner digraph with a random orientation.
  This heuristic serves as the baseline in our experiments.
\item[] \textsc{DegMin}:
  Selects an edge incident to a node with minimal degree in the input graph.
  The accepted edges are oriented towards the node of minimal degree.
\item[] \textsc{IncProcMin}:
  Selects an edge whose endpoints have minimal total number of incident processed edges in the input graph.
  The accepted edges are oriented towards the endpoint with the smaller number of incident processed edges.
\item[] \textsc{IncInDegMin}:
  Selects an edge whose endpoints have minimal total indegree in the inner digraph.
  The accepted edges are oriented towards the endpoint with the smaller indegree.
\end{enumerate}

\subsubsection*{Node orders}
In this section, we introduce heuristics that iteratively select a node and process all incident edges that have not yet been processed.
These heuristics can also be used in the $(k, \ell)$-component-based algorithm, and we denote their component-based variants by the suffix \textsc{Comp} to distinguish them from edge-based heuristics.
\begin{enumerate}[leftmargin=*]
\item[] \textsc{NBasic}:
  Iterates over the nodes in random order, as specified by the graph representation, and processes all incident edges for each node.
  The accepted edges are oriented towards the current node (and outwards from the current node in the \textsc{Comp} version).
\item[] \textsc{NDegMin}:
  Selects a node with minimal degree in the input graph and processes all its incident edges.
  The accepted edges are oriented towards the current node (also in the \textsc{Comp} version).
\item[] \textsc{NProcMin}:
  Selects a node with minimal number of incident processed edges and processes all incident edges.
  The accepted edges are oriented towards the current node (outwards from the current node in the \textsc{Comp} version).
\item[] \textsc{NInDegMin}:
  Selects a node with minimal indegree in the inner digraph and processes all incident edges.
  The accepted edges are oriented outwards from the current node (towards the current node in the \textsc{Comp} version).
\end{enumerate}

\subsubsection*{A two-phase approach}
We introduce a family of heuristics based on a two-phase process.
In the first phase, we construct a large $(k, \ell)$-sparse subgraph $H$, and in the second phase, we extend it to a maximum-size $(k, \ell)$-sparse subgraph using the augmenting path method.
This approach can potentially speed up the algorithm, provided that we can efficiently construct the subgraph $H$ and orient its edges so that the augmenting path method can handle the remaining edges.
We consider several implementations of the first phase for efficiently constructing $H$.

\paragraph{Forests}
One variant of the two-phase approach starts by constructing a collection of disjoint spanning forests and pseudoforests (i.e.\ undirected graphs in which each connected component contains at most one cycle) in the first phase.
The following claim is key to ensuring that this approach gives a correct result.
\begin{claim}
  The union of $\min\{\ell, 2k - \ell\}$ forests and $(k-\ell)^+$ pseudoforests that are pairwise edge-disjoint is $(k, \ell)$-sparse.
\end{claim}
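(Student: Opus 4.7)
The plan is to bound $i(X)$ for an arbitrary $X \subseteq V$ by summing the contributions of the constituent subgraphs and then checking the resulting expression against $\max\{k|X|-\ell,0\}$. The elementary facts I will use are that the restriction of a forest (respectively pseudoforest) to $X$ is again a forest (respectively pseudoforest), and hence induces at most $|X|-1$ (respectively $|X|$) edges; I also assume that the pseudoforests are loop-free, which is anyway necessary in any $(k,\ell)$-sparse graph with $\ell>0$.

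Writing $f=\min\{\ell,2k-\ell\}$ for the number of forests and $p=(k-\ell)^+$ for the number of pseudoforests, edge-disjointness gives, for every nonempty $X$,
\[
  i(X) \;\leq\; f(|X|-1) + p|X| \;=\; (f+p)|X| - f.
\]
I then split on the sign of $k-\ell$. If $\ell \leq k$, then $f=\ell$ and $p=k-\ell$, so $f+p=k$ and the bound reads $k|X|-\ell$, matching the sparsity requirement exactly. If $\ell>k$, then $p=0$ and $f=2k-\ell$, so the bound becomes $(2k-\ell)(|X|-1)$; rewriting
\[
  (2k-\ell)(|X|-1) - (k|X|-\ell) \;=\; (k-\ell)(|X|-2),
\]
one sees this difference is non-positive whenever $|X| \geq 2$, since $k-\ell \leq 0$.

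What remains is the bookkeeping for the clipped form $\max\{k|X|-\ell,0\}$ on small sets. For $|X|=0$ there is nothing to check; for $|X|=1$ the loop-free convention forces $i(X)=0$; and if $\ell>k$ happens to push $k|X|-\ell$ negative for some $X$, then in the range $\ell\leq 2k$ this can occur only for $|X|=1$, so no new inequality is required. The only genuine subtlety is the case split on whether $\ell \leq k$ or $\ell > k$; the rest is routine algebra, and the corner case $\ell=2k$ is vacuous since then $f=p=0$ and the union is empty.
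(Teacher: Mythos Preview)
Your proof is correct and essentially identical to the paper's: both split on whether $\ell\le k$ or $\ell>k$, bound $i(X)$ by $f(|X|-1)+p|X|$, rewrite the $\ell>k$ case as $k|X|-\ell+(k-\ell)(|X|-2)$, and dispose of $|X|=1$ via loop-freeness of forests. One minor inaccuracy worth flagging: your side remark that loop-free pseudoforests are ``anyway necessary in any $(k,\ell)$-sparse graph with $\ell>0$'' is false (for instance, a $(2,1)$-sparse graph may carry a loop since $\max\{2\cdot 1-1,0\}=1$), but this extra assumption is harmless here because pseudoforests appear only when $\ell\le k$, and in that regime your main inequality already yields $i(\{v\})\le k-\ell$ for $|X|=1$ without invoking it.
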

\begin{proof}
  For $0 \leq \ell \leq k$, we have the following inequality for any non-empty subset $X$ of nodes:
  \[
    i(X) \leq \min\{\ell, 2k - \ell\} (|X| - 1) + {(k - \ell)}^+ |X| = \ell (|X| - 1) + {(k - \ell)} |X| = k |X| - \ell.
  \]
  For $k < \ell$, the inequality becomes:
  \begin{align*}
    i(X) &\leq \min\{\ell, 2k - \ell\} (|X| - 1) + {(k - \ell)}^+ |X| = (2k - \ell) (|X| - 1)\\
         &= 2k|X| - \ell|X| - 2k + \ell = k|X| - \ell + (k-\ell)(|X|-2) \leq k|X| - \ell.
  \end{align*}
  This holds for each subset $X$ of size at least two.
  The condition also holds for subsets of size one, as forests are loop-free by definition.
\end{proof}

In the first phase, we orient the edges of each forest to form an inbranching, typically obtained through a graph traversal algorithm.
For pseudoforests, we adjust the orientation by reversing a path from one of the endpoints of the extra edge in the component to the root of the component.
This ensures that the extra edge can be oriented such that all indegrees remain at most one.
The digraph resulting from this construction serves as a valid inner digraph for the augmenting path method.
In the second phase, we apply the \textsc{Basic} heuristic to extract the maximum-size $(k, \ell)$-sparse subgraph.

Utilizing the freedom in choosing which forests are extended to pseudoforests, we first find the pseudoforests, then the forests, which maximizes the number of edges accepted in the first phase in practice.

We also have flexibility in how the (pseudo)forests are constructed.
Several different methods, such as BFS, DFS, and the union-find data structure, can be used to find the spanning forests.
Additionally, the order in which edges are inserted into the forests can affect the resulting structure, hence we consider various versions of the union-find approach to explore this flexibility.

We now introduce several variations of the heuristic based on different methods for constructing the (pseudo)forests.
\begin{enumerate}[leftmargin=*]
\item[] \textsc{PForestsBFS}:
  Finds $\min\{\ell, 2k - \ell\}$ spanning forests and $(k-\ell)^+$ pseudoforests using BFS.
\item[] \textsc{PForestsDFS}:
  Finds $\min\{\ell, 2k - \ell\}$ spanning forests and $(k-\ell)^+$ pseudoforests using DFS.
\item[] \textsc{ForestsBFS}:
  Finds $\min\{\ell, 2k - \ell\} + (k-\ell)^+$ spanning forests using BFS.
\item[] \textsc{ForestsDFS}:
  Finds $\min\{\ell, 2k - \ell\} + (k-\ell)^+$ spanning forests using DFS.
\item[] \textsc{UnionBasic}:
  Finds $\min\{\ell, 2k - \ell\} + (k-\ell)^+$ spanning forests, uses the \textsc{Basic} order.
\item[] \textsc{UnionNBasic}:
  Finds $\min\{\ell, 2k - \ell\} + (k-\ell)^+$ spanning forests, uses the \textsc{NBasic} order.
\item[] \textsc{UnionTranspOne}:
  Finds $\min\{\ell, 2k - \ell\} + (k-\ell)^+$ spanning forests, uses the \textsc{TranspOne} order, see the next paragraph.
\end{enumerate}

\subsubsection*{Transposed edge orders}
In the first phase of the two-phase approach, we can run the augmenting path method without performing any augmentations.
This means that edges are only accepted if the orientation of the inner digraph permits it without requiring any path reversals.
In the second phase, the remaining edges are processed in the usual manner.
Although the first phase significantly reduces the number of necessary graph traversals, it can still prove inefficient, especially because the early-termination condition is rarely met.
This often necessitates iterating over the entire edge set, which can result in quadratic time complexity in dense graphs.
To overcome this inefficiency, we introduce \emph{transposed} edge orders, which combine the two phases into one, avoiding the need to iterate through the entire edge set.
\begin{itemize}[leftmargin=*]
\item[] \textsc{Transp}:
  Cyclically iterates over the nodes and selects the next unprocessed edge incident to the current node, and then moves to the next node.
  The accepted edges are oriented towards the current node.
\item[] \textsc{TranspOne}:
  Cyclically iterates over the nodes and selects the next unprocessed edge incident to the current node as long as an edge is accepted at each node (or no more incident edges exist), before moving to the next node.
  The accepted edges are oriented towards the current node.
\end{itemize}

\medskip
The choice of edge-processing order significantly impacts the efficiency of the augmenting path algorithm.
Our heuristics aim to minimize computational overhead by prioritizing edges that are easier to process.
While these heuristics are motivated by properties of sparse graphs, their practical performance varies depending on the graph structure and density.
In the next section, we conduct an extensive empirical evaluation of these heuristics and compare them against existing solutions to assess their efficiency across various graph families.

\section{Performance evaluation of heuristics}\label{sec:benchmark}
We first describe the benchmarking environment used for testing.
The task we consider is the extraction problem, that is, we extract a maximum-size $(k, \ell)$-sparse subgraph.
The types of graphs used in the benchmarks are as follows:
\begin{enumerate}[leftmargin=*]
\item[] \textbf{Erd\H os--R\'enyi}: Erd\H os--R\'enyi random graphs, where each edge is independently included with probability $p$.
 We test the values $p \in \{0.05, 0.1, 0.5, 0.9\}$.

\item[] \textbf{Barab\'asi--Albert}: Barab\'asi--Albert scale-free graphs, generated by starting with an empty graph on $m$ nodes.
 New nodes are added one by one, each selecting $m$ distinct neighbors from existing nodes with probability proportional to their degrees.
 We test for $m \in \{5, 50\}$.

\item[] \textbf{Rigid}: Rigid graphs in the plane, that is, $(2,3)$-spanning graphs, generated using the process suggested by Cs.\ Kir\'aly and implemented by Mih\'alyk\'o.
  This involves first sampling three independent random labeled spanning trees on the same node set $V$, and letting $T=(V,E)$ be their multigraph union.
  For each node $v \in V$, we then replace $v$ by a complete graph on $d_T(v) + 1$ nodes and reattach the edges originally incident to $v$ so that each such edge is incident to a distinct node of the clique, leaving exactly one node with no reattached edges.

\item[] \textbf{Tight}: $(k, k)$-tight graphs, obtained by taking the multigraph union of $k$ random (labeled) spanning trees on the same node set.
  We consider the case $k=3$.

\item[] \textbf{Molecules and proteins}: Real-world molecular and protein graphs, extracted from the Protein Data Bank (\texttt{https://www.rcsb.org}).
  The 3D rigidity of molecular and protein graphs can be determined using the molecular conjecture~\cite{molecularConjecture}, which involves finding six disjoint spanning trees in the graph formed by taking every edge with multiplicity five.
Therefore, we modify the graphs accordingly.
\end{enumerate}

For each random graph type and a given number of nodes, we generate 10 graphs and compute the average running time for all pairs $(k, \ell)$, where $1 \leq k \leq 10$ and $0 \leq \ell < 2k$.
For molecules and proteins, we take the average running time over all instances on the given number of nodes.
In each figure below, the horizontal axis shows the number of nodes in the considered graphs and the vertical axis shows the average running time in seconds.
The legend entries are sorted by performance on the largest problem instances.
The benchmarks were executed on a machine with an AMD Ryzen 9 3950X CPU, $32$ GB RAM, running a Linux operating system.

\begin{figure}[H]
  \setcounter{subfigure}{0}
  \centering
  \subfloat[$p=0.05$] {
    \hspace{5mm}
    \pgfplotstableread[col sep=comma]{./results/heuristics/erdos_renyi/multiplicity_1/prob_0.05/Full/all.csv}{\pltA}
    \begin{tikzpicture}[scale=.95, trim axis left, trim axis right]
      \begin{axis}[
        xmin = 100.0, xmax = 1000.0,
        ymin = 0, ymax = 0.25,
        xtick distance = 200.0,
        x tick label style={/pgf/number format/1000 sep = \kern 0.15em},
        tick label style={font=\scriptsize},
        label style={font=\scriptsize},
        grid = both,
        minor tick num = 4,
        major grid style = {lightgray},
        minor grid style = {lightgray!25},
        legend style={nodes={scale=0.5, transform shape},legend pos = outer south,legend columns=3,reverse legend},
        xtick={100.0, 200.0, 300.0, 400.0, 500.0, 600.0, 700.0, 800.0, 900.0, 1000.0}
        ]
        \addplot[green, mark = oplus, mark size=1.75pt] table [x = {n}, y = {Basic1}] {\pltA};
        \addlegendentry{\textsc{Basic}}
        \addplot[orange, mark = halfsquare left*, mark size=2pt] table [x = {n}, y = {UnionTranspOneOut1}] {\pltA};
        \addlegendentry{\textsc{UnionTranspOne}}
        \addplot[red, mark = halfsquare right*, mark size=2pt] table [x = {n}, y = {UnionNodeBasic1}] {\pltA};
        \addlegendentry{\textsc{UnionNBasic}}
        \addplot[magenta, mark = halfdiamond*, mark size=2pt] table [x = {n}, y = {UnionBasic1}] {\pltA};
        \addlegendentry{\textsc{UnionBasic}}
        \addplot[brown, mark = halfcircle*, mark size=1.75pt] table [x = {n}, y = {PForestsDFS21}] {\pltA};
        \addlegendentry{\textsc{PForestsDFS}}
        \addplot[purple, mark = pentagon*, mark size=2pt] table [x = {n}, y = {ForestsDFSExplicit1}] {\pltA};
        \addlegendentry{\textsc{ForestsDFS}}
        \addplot[green, mark = triangle*, mark size=2pt] table [x = {n}, y = {ForestsBFSExplicit1}] {\pltA};
        \addlegendentry{\textsc{ForestsBFS}}
        \addplot[blue, mark = diamond*, mark size=2pt] table [x = {n}, y = {PForestsBFS21}] {\pltA};
        \addlegendentry{\textsc{PForestsBFS}}
        \addplot[green, mark = heart, mark size=2pt] table [x = {n}, y = {NodeBasicComp1}] {\pltA};
        \addlegendentry{\textsc{NBasicComp}}
        \addplot[blue, mark = otimes, mark size=1.75pt] table [x = {n}, y = {DegMin0}] {\pltA};
        \addlegendentry{\textsc{DegMin}}
        \addplot[blue, mark = Mercedes star, mark size=2pt] table [x = {n}, y = {NodeDeg0}] {\pltA};
        \addlegendentry{\textsc{NDegMin}}
        \addplot[red, mark = otimes*, mark size=1.75pt] table [x = {n}, y = {NodeInDegMin1}] {\pltA};
        \addlegendentry{\textsc{NInDegMin}}
        \addplot[purple, mark = Mercedes star flipped, mark size=2pt] table [x = {n}, y = {NodeDegComp0}] {\pltA};
        \addlegendentry{\textsc{NDegMinComp}}
        \addplot[orange, mark = halfcircle, mark size=1.75pt] table [x = {n}, y = {NodeBasic0}] {\pltA};
        \addlegendentry{\textsc{NBasic}}
        \addplot[purple, mark = square, mark size=1.5pt] table [x = {n}, y = {IncProcMin0}] {\pltA};
        \addlegendentry{\textsc{IncProcMin}}
        \addplot[magenta, mark = oplus*, mark size=1.75pt] table [x = {n}, y = {NodeProcMinComp1}] {\pltA};
        \addlegendentry{\textsc{NProcMinComp}}
        \addplot[orange, mark = square*, mark size=1.5pt] table [x = {n}, y = {NodeInDegMinComp0}] {\pltA};
        \addlegendentry{\textsc{NInDegMinComp}}
        \addplot[brown, mark = 10-pointed star, mark size=2pt] table [x = {n}, y = {NodeProcMin0}] {\pltA};
        \addlegendentry{\textsc{NProcMin}}
        \addplot[brown, mark = triangle, mark size=2pt] table [x = {n}, y = {IncInDegMin0}] {\pltA};
        \addlegendentry{\textsc{IncInDegMin}}
        \addplot[red, mark = pentagon, mark size=2pt] table [x = {n}, y = {TranspOneOut0}] {\pltA};
        \addlegendentry{\textsc{TranspOne}}
        \addplot[magenta, mark = diamond, mark size=2pt] table [x = {n}, y = {Transp0}] {\pltA};
        \addlegendentry{\textsc{Transp}}
      \end{axis}
    \end{tikzpicture}
  }%
  \hfill
  \subfloat[$p=0.1$] {
    \pgfplotstableread[col sep=comma]{./results/heuristics/erdos_renyi/multiplicity_1/prob_0.10/Full/all.csv}{\pltA}
    \begin{tikzpicture}[scale=.95, trim axis left, trim axis right]
      \begin{axis}[
        xmin = 100.0, xmax = 1000.0,
        ymin = 0, ymax = 0.25,
        xtick distance = 200.0,
        x tick label style={/pgf/number format/1000 sep = \kern 0.15em},
        tick label style={font=\scriptsize},
        label style={font=\scriptsize},
        grid = both,
        minor tick num = 4,
        major grid style = {lightgray},
        minor grid style = {lightgray!25},
        legend style={nodes={scale=0.5, transform shape},legend pos = outer south,legend columns=3,reverse legend},
        xtick={100.0, 200.0, 300.0, 400.0, 500.0, 600.0, 700.0, 800.0, 900.0, 1000.0}
        ]
        \addplot[green, mark = oplus, mark size=1.75pt] table [x = {n}, y = {Basic1}] {\pltA};
        \addlegendentry{\textsc{Basic}}
        \addplot[orange, mark = halfsquare left*, mark size=2pt] table [x = {n}, y = {UnionTranspOneOut1}] {\pltA};
        \addlegendentry{\textsc{UnionTranspOne}}
        \addplot[red, mark = halfsquare right*, mark size=2pt] table [x = {n}, y = {UnionNodeBasic1}] {\pltA};
        \addlegendentry{\textsc{UnionNBasic}}
        \addplot[magenta, mark = halfdiamond*, mark size=2pt] table [x = {n}, y = {UnionBasic1}] {\pltA};
        \addlegendentry{\textsc{UnionBasic}}
        \addplot[purple, mark = pentagon*, mark size=2pt] table [x = {n}, y = {ForestsDFSExplicit1}] {\pltA};
        \addlegendentry{\textsc{ForestsDFS}}
        \addplot[brown, mark = halfcircle*, mark size=1.75pt] table [x = {n}, y = {PForestsDFS21}] {\pltA};
        \addlegendentry{\textsc{PForestsDFS}}
        \addplot[green, mark = triangle*, mark size=2pt] table [x = {n}, y = {ForestsBFSExplicit1}] {\pltA};
        \addlegendentry{\textsc{ForestsBFS}}
        \addplot[blue, mark = diamond*, mark size=2pt] table [x = {n}, y = {PForestsBFS21}] {\pltA};
        \addlegendentry{\textsc{PForestsBFS}}
        \addplot[green, mark = heart, mark size=2pt] table [x = {n}, y = {NodeBasicComp1}] {\pltA};
        \addlegendentry{\textsc{NBasicComp}}
        \addplot[purple, mark = square, mark size=1.5pt] table [x = {n}, y = {IncProcMin0}] {\pltA};
        \addlegendentry{\textsc{IncProcMin}}
        \addplot[blue, mark = otimes, mark size=1.75pt] table [x = {n}, y = {DegMin0}] {\pltA};
        \addlegendentry{\textsc{DegMin}}
        \addplot[brown, mark = triangle, mark size=2pt] table [x = {n}, y = {IncInDegMin0}] {\pltA};
        \addlegendentry{\textsc{IncInDegMin}}
        \addplot[purple, mark = Mercedes star flipped, mark size=2pt] table [x = {n}, y = {NodeDegComp0}] {\pltA};
        \addlegendentry{\textsc{NDegMinComp}}
        \addplot[magenta, mark = oplus*, mark size=1.75pt] table [x = {n}, y = {NodeProcMinComp1}] {\pltA};
        \addlegendentry{\textsc{NProcMinComp}}
        \addplot[orange, mark = square*, mark size=1.5pt] table [x = {n}, y = {NodeInDegMinComp0}] {\pltA};
        \addlegendentry{\textsc{NInDegMinComp}}
        \addplot[blue, mark = Mercedes star, mark size=2pt] table [x = {n}, y = {NodeDeg0}] {\pltA};
        \addlegendentry{\textsc{NDegMin}}
        \addplot[red, mark = otimes*, mark size=1.75pt] table [x = {n}, y = {NodeInDegMin1}] {\pltA};
        \addlegendentry{\textsc{NInDegMin}}
        \addplot[orange, mark = halfcircle, mark size=1.75pt] table [x = {n}, y = {NodeBasic0}] {\pltA};
        \addlegendentry{\textsc{NBasic}}
        \addplot[brown, mark = 10-pointed star, mark size=2pt] table [x = {n}, y = {NodeProcMin0}] {\pltA};
        \addlegendentry{\textsc{NProcMin}}
        \addplot[red, mark = pentagon, mark size=2pt] table [x = {n}, y = {TranspOneOut0}] {\pltA};
        \addlegendentry{\textsc{TranspOne}}
        \addplot[magenta, mark = diamond, mark size=2pt] table [x = {n}, y = {Transp0}] {\pltA};
        \addlegendentry{\textsc{Transp}}
      \end{axis}
    \end{tikzpicture}
    \hspace{5mm}
  }
  \newline
  \centering
  \subfloat[$p=0.5$] {
    \hspace{5mm}
    \pgfplotstableread[col sep=comma]{./results/heuristics/erdos_renyi/multiplicity_1/prob_0.50/Full/all.csv}{\pltA}
    \begin{tikzpicture}[scale=.95, trim axis left, trim axis right]
      \begin{axis}[
        xmin = 100.0, xmax = 1000.0,
        ymin = 0, ymax = 0.5,
        xtick distance = 200.0,
        x tick label style={/pgf/number format/1000 sep = \kern 0.15em},
        tick label style={font=\scriptsize},
        label style={font=\scriptsize},
        grid = both,
        minor tick num = 4,
        major grid style = {lightgray},
        minor grid style = {lightgray!25},
        legend style={nodes={scale=0.5, transform shape},legend pos = outer south,legend columns=3,reverse legend},
        xtick={100.0, 200.0, 300.0, 400.0, 500.0, 600.0, 700.0, 800.0, 900.0, 1000.0}
        ]
        \addplot[green, mark = oplus, mark size=1.75pt] table [x = {n}, y = {Basic1}] {\pltA};
        \addlegendentry{\textsc{Basic}}
        \addplot[orange, mark = halfsquare left*, mark size=2pt] table [x = {n}, y = {UnionTranspOneOut1}] {\pltA};
        \addlegendentry{\textsc{UnionTranspOne}}
        \addplot[red, mark = halfsquare right*, mark size=2pt] table [x = {n}, y = {UnionNodeBasic1}] {\pltA};
        \addlegendentry{\textsc{UnionNBasic}}
        \addplot[magenta, mark = halfdiamond*, mark size=2pt] table [x = {n}, y = {UnionBasic1}] {\pltA};
        \addlegendentry{\textsc{UnionBasic}}
        \addplot[purple, mark = pentagon*, mark size=2pt] table [x = {n}, y = {ForestsDFSExplicit1}] {\pltA};
        \addlegendentry{\textsc{ForestsDFS}}
        \addplot[brown, mark = halfcircle*, mark size=1.75pt] table [x = {n}, y = {PForestsDFS21}] {\pltA};
        \addlegendentry{\textsc{PForestsDFS}}
        \addplot[green, mark = triangle*, mark size=2pt] table [x = {n}, y = {ForestsBFSExplicit1}] {\pltA};
        \addlegendentry{\textsc{ForestsBFS}}
        \addplot[blue, mark = diamond*, mark size=2pt] table [x = {n}, y = {PForestsBFS21}] {\pltA};
        \addlegendentry{\textsc{PForestsBFS}}
        \addplot[purple, mark = square, mark size=1.5pt] table [x = {n}, y = {IncProcMin0}] {\pltA};
        \addlegendentry{\textsc{IncProcMin}}
        \addplot[brown, mark = triangle, mark size=2pt] table [x = {n}, y = {IncInDegMin0}] {\pltA};
        \addlegendentry{\textsc{IncInDegMin}}
        \addplot[green, mark = heart, mark size=2pt] table [x = {n}, y = {NodeBasicComp1}] {\pltA};
        \addlegendentry{\textsc{NBasicComp}}
        \addplot[blue, mark = otimes, mark size=1.75pt] table [x = {n}, y = {DegMin0}] {\pltA};
        \addlegendentry{\textsc{DegMin}}
        \addplot[purple, mark = Mercedes star flipped, mark size=2pt] table [x = {n}, y = {NodeDegComp0}] {\pltA};
        \addlegendentry{\textsc{NDegMinComp}}
        \addplot[magenta, mark = oplus*, mark size=1.75pt] table [x = {n}, y = {NodeProcMinComp1}] {\pltA};
        \addlegendentry{\textsc{NProcMinComp}}
        \addplot[orange, mark = square*, mark size=1.5pt] table [x = {n}, y = {NodeInDegMinComp0}] {\pltA};
        \addlegendentry{\textsc{NInDegMinComp}}
        \addplot[blue, mark = Mercedes star, mark size=2pt] table [x = {n}, y = {NodeDeg0}] {\pltA};
        \addlegendentry{\textsc{NDegMin}}
        \addplot[brown, mark = 10-pointed star, mark size=2pt] table [x = {n}, y = {NodeProcMin0}] {\pltA};
        \addlegendentry{\textsc{NProcMin}}
        \addplot[red, mark = otimes*, mark size=1.75pt] table [x = {n}, y = {NodeInDegMin1}] {\pltA};
        \addlegendentry{\textsc{NInDegMin}}
        \addplot[red, mark = pentagon, mark size=2pt] table [x = {n}, y = {TranspOneOut0}] {\pltA};
        \addlegendentry{\textsc{TranspOne}}
        \addplot[orange, mark = halfcircle, mark size=1.75pt] table [x = {n}, y = {NodeBasic0}] {\pltA};
        \addlegendentry{\textsc{NBasic}}
        \addplot[magenta, mark = diamond, mark size=2pt] table [x = {n}, y = {Transp0}] {\pltA};
        \addlegendentry{\textsc{Transp}}
      \end{axis}
    \end{tikzpicture}
  }%
  \hfill
  \subfloat[$p=0.9$] {
    \pgfplotstableread[col sep=comma]{./results/heuristics/erdos_renyi/multiplicity_1/prob_0.90/Full/all.csv}{\pltA}
    \begin{tikzpicture}[scale=.95, trim axis left, trim axis right]
      \begin{axis}[
        xmin = 100.0, xmax = 1000.0,
        ymin = 0, ymax = 0.5,
        xtick distance = 200.0,
        x tick label style={/pgf/number format/1000 sep = \kern 0.15em},
        tick label style={font=\scriptsize},
        label style={font=\scriptsize},
        grid = both,
        minor tick num = 4,
        major grid style = {lightgray},
        minor grid style = {lightgray!25},
        legend style={nodes={scale=0.5, transform shape},legend pos = outer south,legend columns=3,reverse legend},
        xtick={100.0, 200.0, 300.0, 400.0, 500.0, 600.0, 700.0, 800.0, 900.0, 1000.0}
        ]
        \addplot[green, mark = oplus, mark size=1.75pt] table [x = {n}, y = {Basic1}] {\pltA};
        \addlegendentry{\textsc{Basic}}
        \addplot[orange, mark = halfsquare left*, mark size=2pt] table [x = {n}, y = {UnionTranspOneOut1}] {\pltA};
        \addlegendentry{\textsc{UnionTranspOne}}
        \addplot[red, mark = halfsquare right*, mark size=2pt] table [x = {n}, y = {UnionNodeBasic1}] {\pltA};
        \addlegendentry{\textsc{UnionNBasic}}
        \addplot[magenta, mark = halfdiamond*, mark size=2pt] table [x = {n}, y = {UnionBasic1}] {\pltA};
        \addlegendentry{\textsc{UnionBasic}}
        \addplot[purple, mark = pentagon*, mark size=2pt] table [x = {n}, y = {ForestsDFSExplicit1}] {\pltA};
        \addlegendentry{\textsc{ForestsDFS}}
        \addplot[brown, mark = halfcircle*, mark size=1.75pt] table [x = {n}, y = {PForestsDFS21}] {\pltA};
        \addlegendentry{\textsc{PForestsDFS}}
        \addplot[green, mark = triangle*, mark size=2pt] table [x = {n}, y = {ForestsBFSExplicit1}] {\pltA};
        \addlegendentry{\textsc{ForestsBFS}}
        \addplot[purple, mark = square, mark size=1.5pt] table [x = {n}, y = {IncProcMin0}] {\pltA};
        \addlegendentry{\textsc{IncProcMin}}
        \addplot[blue, mark = diamond*, mark size=2pt] table [x = {n}, y = {PForestsBFS21}] {\pltA};
        \addlegendentry{\textsc{PForestsBFS}}
        \addplot[brown, mark = triangle, mark size=2pt] table [x = {n}, y = {IncInDegMin0}] {\pltA};
        \addlegendentry{\textsc{IncInDegMin}}
        \addplot[green, mark = heart, mark size=2pt] table [x = {n}, y = {NodeBasicComp1}] {\pltA};
        \addlegendentry{\textsc{NBasicComp}}
        \addplot[blue, mark = otimes, mark size=1.75pt] table [x = {n}, y = {DegMin0}] {\pltA};
        \addlegendentry{\textsc{DegMin}}
        \addplot[purple, mark = Mercedes star flipped, mark size=2pt] table [x = {n}, y = {NodeDegComp0}] {\pltA};
        \addlegendentry{\textsc{NDegMinComp}}
        \addplot[blue, mark = Mercedes star, mark size=2pt] table [x = {n}, y = {NodeDeg0}] {\pltA};
        \addlegendentry{\textsc{NDegMin}}
        \addplot[magenta, mark = oplus*, mark size=1.75pt] table [x = {n}, y = {NodeProcMinComp1}] {\pltA};
        \addlegendentry{\textsc{NProcMinComp}}
        \addplot[orange, mark = square*, mark size=1.5pt] table [x = {n}, y = {NodeInDegMinComp0}] {\pltA};
        \addlegendentry{\textsc{NInDegMinComp}}
        \addplot[magenta, mark = diamond, mark size=2pt] table [x = {n}, y = {Transp0}] {\pltA};
        \addlegendentry{\textsc{Transp}}
        \addplot[red, mark = pentagon, mark size=2pt] table [x = {n}, y = {TranspOneOut0}] {\pltA};
        \addlegendentry{\textsc{TranspOne}}
        \addplot[red, mark = otimes*, mark size=1.75pt] table [x = {n}, y = {NodeInDegMin1}] {\pltA};
        \addlegendentry{\textsc{NInDegMin}}
        \addplot[brown, mark = 10-pointed star, mark size=2pt] table [x = {n}, y = {NodeProcMin0}] {\pltA};
        \addlegendentry{\textsc{NProcMin}}
        \addplot[orange, mark = halfcircle, mark size=1.75pt] table [x = {n}, y = {NodeBasic0}] {\pltA};
        \addlegendentry{\textsc{NBasic}}
      \end{axis}
    \end{tikzpicture}
    \hspace{5mm}
  }
  \caption{Erd\H os--R\'enyi random graphs.
    The \textsc{Transp} and \textsc{TranspOne} heuristics consistently outperform others for $p = 0.05$ and $p = 0.1$.
    For denser graphs, \textsc{NBasic}, \textsc{NProcMin}, and \textsc{NInDegMin} show competitive performance.}
\end{figure}
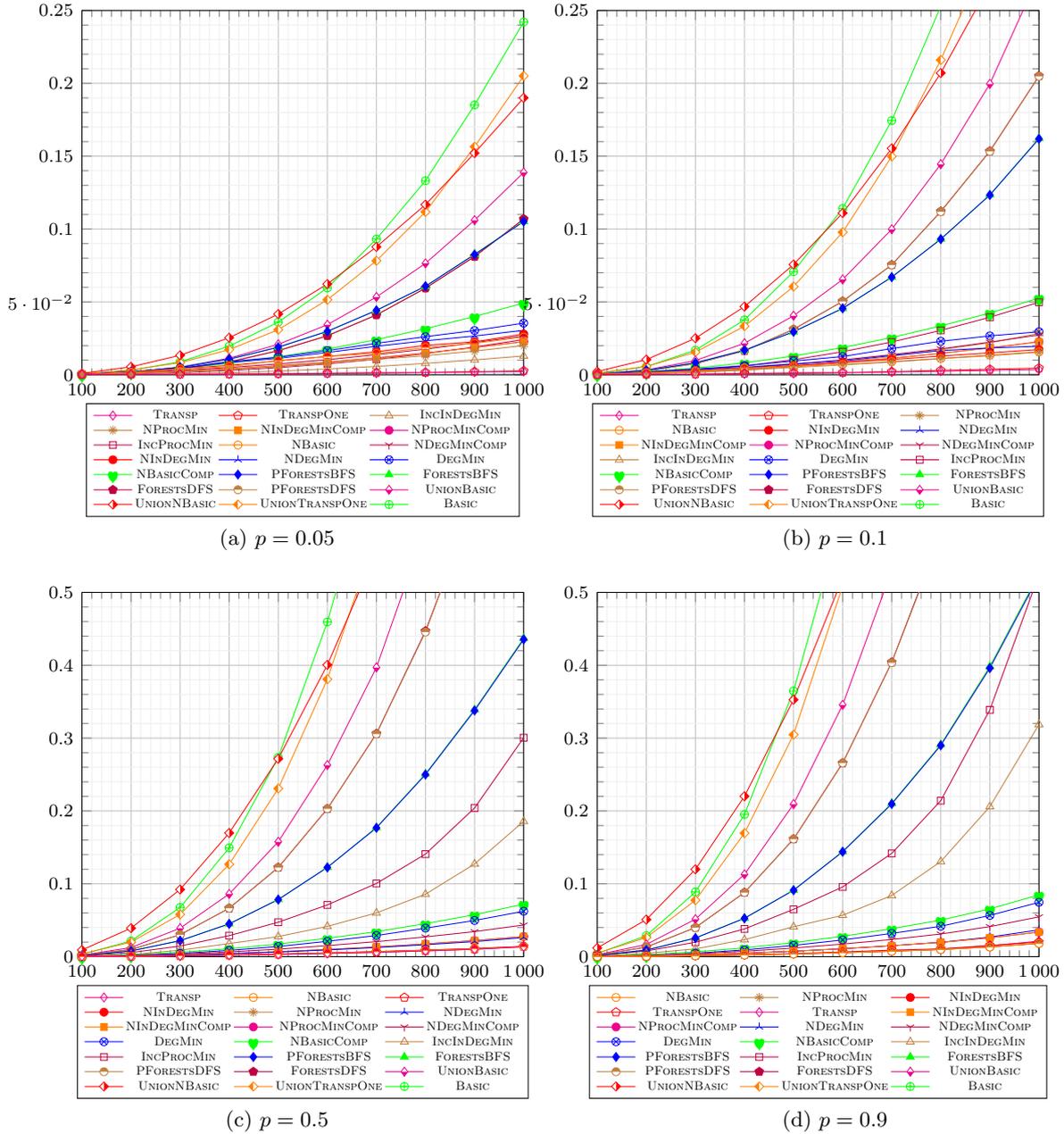

\begin{figure}[H]
  \setcounter{subfigure}{0}
  \centering
  \subfloat[$m=5$] {
    \hspace{5mm}
    \pgfplotstableread[col sep=comma]{./results/heuristics/barabasi_albert/edges_5/Full/all.csv}{\pltA}
    \begin{tikzpicture}[scale=.95, trim axis left, trim axis right]
      \begin{axis}[
        xmin = 100.0, xmax = 1000.0,
        ymin = 0, ymax = 0.025,
        xtick distance = 200.0,
        x tick label style={/pgf/number format/1000 sep = \kern 0.15em},
        tick label style={font=\scriptsize},
        label style={font=\scriptsize},
        grid = both,
        minor tick num = 4,
        major grid style = {lightgray},
        minor grid style = {lightgray!25},
        legend style={nodes={scale=0.5, transform shape},legend pos = outer south,legend columns=3,reverse legend},
        xtick={100.0, 200.0, 300.0, 400.0, 500.0, 600.0, 700.0, 800.0, 900.0, 1000.0}
        ]
        \addplot[purple, mark = Mercedes star flipped, mark size=2pt] table [x = {n}, y = {NodeDegComp0}] {\pltA};
        \addlegendentry{\textsc{NDegMinComp}}
        \addplot[green, mark = heart, mark size=2pt] table [x = {n}, y = {NodeBasicComp1}] {\pltA};
        \addlegendentry{\textsc{NBasicComp}}
        \addplot[magenta, mark = oplus*, mark size=1.75pt] table [x = {n}, y = {NodeProcMinComp1}] {\pltA};
        \addlegendentry{\textsc{NProcMinComp}}
        \addplot[orange, mark = square*, mark size=1.5pt] table [x = {n}, y = {NodeInDegMinComp0}] {\pltA};
        \addlegendentry{\textsc{NInDegMinComp}}
        \addplot[red, mark = halfsquare right*, mark size=2pt] table [x = {n}, y = {UnionNodeBasic1}] {\pltA};
        \addlegendentry{\textsc{UnionNBasic}}
        \addplot[green, mark = oplus, mark size=1.75pt] table [x = {n}, y = {Basic1}] {\pltA};
        \addlegendentry{\textsc{Basic}}
        \addplot[orange, mark = halfcircle, mark size=1.75pt] table [x = {n}, y = {NodeBasic0}] {\pltA};
        \addlegendentry{\textsc{NBasic}}
        \addplot[blue, mark = otimes, mark size=1.75pt] table [x = {n}, y = {DegMin0}] {\pltA};
        \addlegendentry{\textsc{DegMin}}
        \addplot[blue, mark = Mercedes star, mark size=2pt] table [x = {n}, y = {NodeDeg0}] {\pltA};
        \addlegendentry{\textsc{NDegMin}}
        \addplot[orange, mark = halfsquare left*, mark size=2pt] table [x = {n}, y = {UnionTranspOneOut1}] {\pltA};
        \addlegendentry{\textsc{UnionTranspOne}}
        \addplot[magenta, mark = halfdiamond*, mark size=2pt] table [x = {n}, y = {UnionBasic1}] {\pltA};
        \addlegendentry{\textsc{UnionBasic}}
        \addplot[brown, mark = halfcircle*, mark size=1.75pt] table [x = {n}, y = {PForestsDFS21}] {\pltA};
        \addlegendentry{\textsc{PForestsDFS}}
        \addplot[purple, mark = pentagon*, mark size=2pt] table [x = {n}, y = {ForestsDFSExplicit1}] {\pltA};
        \addlegendentry{\textsc{ForestsDFS}}
        \addplot[green, mark = triangle*, mark size=2pt] table [x = {n}, y = {ForestsBFSExplicit1}] {\pltA};
        \addlegendentry{\textsc{ForestsBFS}}
        \addplot[blue, mark = diamond*, mark size=2pt] table [x = {n}, y = {PForestsBFS21}] {\pltA};
        \addlegendentry{\textsc{PForestsBFS}}
        \addplot[red, mark = otimes*, mark size=1.75pt] table [x = {n}, y = {NodeInDegMin1}] {\pltA};
        \addlegendentry{\textsc{NInDegMin}}
        \addplot[purple, mark = square, mark size=1.5pt] table [x = {n}, y = {IncProcMin0}] {\pltA};
        \addlegendentry{\textsc{IncProcMin}}
        \addplot[brown, mark = 10-pointed star, mark size=2pt] table [x = {n}, y = {NodeProcMin0}] {\pltA};
        \addlegendentry{\textsc{NProcMin}}
        \addplot[brown, mark = triangle, mark size=2pt] table [x = {n}, y = {IncInDegMin0}] {\pltA};
        \addlegendentry{\textsc{IncInDegMin}}
        \addplot[magenta, mark = diamond, mark size=2pt] table [x = {n}, y = {Transp0}] {\pltA};
        \addlegendentry{\textsc{Transp}}
        \addplot[red, mark = pentagon, mark size=2pt] table [x = {n}, y = {TranspOneOut0}] {\pltA};
        \addlegendentry{\textsc{TranspOne}}
      \end{axis}
    \end{tikzpicture}
  }%
  \hfill
  \subfloat[$m=50$] {
    \pgfplotstableread[col sep=comma]{./results/heuristics/barabasi_albert/edges_50/Full/all.csv}{\pltA}
    \begin{tikzpicture}[scale=.95, trim axis left, trim axis right]
      \begin{axis}[
        xmin = 100.0, xmax = 1000.0,
        ymin = 0, ymax = 0.1,
        xtick distance = 200.0,
        x tick label style={/pgf/number format/1000 sep = \kern 0.15em},
        tick label style={font=\scriptsize},
        label style={font=\scriptsize},
        grid = both,
        minor tick num = 4,
        major grid style = {lightgray},
        minor grid style = {lightgray!25},
        legend style={nodes={scale=0.5, transform shape},legend pos = outer south,legend columns=3,reverse legend},
        xtick={100.0, 200.0, 300.0, 400.0, 500.0, 600.0, 700.0, 800.0, 900.0, 1000.0}
        ]
        \addplot[green, mark = oplus, mark size=1.75pt] table [x = {n}, y = {Basic1}] {\pltA};
        \addlegendentry{\textsc{Basic}}
        \addplot[red, mark = halfsquare right*, mark size=2pt] table [x = {n}, y = {UnionNodeBasic1}] {\pltA};
        \addlegendentry{\textsc{UnionNBasic}}
        \addplot[orange, mark = halfsquare left*, mark size=2pt] table [x = {n}, y = {UnionTranspOneOut1}] {\pltA};
        \addlegendentry{\textsc{UnionTranspOne}}
        \addplot[magenta, mark = halfdiamond*, mark size=2pt] table [x = {n}, y = {UnionBasic1}] {\pltA};
        \addlegendentry{\textsc{UnionBasic}}
        \addplot[brown, mark = halfcircle*, mark size=1.75pt] table [x = {n}, y = {PForestsDFS21}] {\pltA};
        \addlegendentry{\textsc{PForestsDFS}}
        \addplot[purple, mark = pentagon*, mark size=2pt] table [x = {n}, y = {ForestsDFSExplicit1}] {\pltA};
        \addlegendentry{\textsc{ForestsDFS}}
        \addplot[green, mark = triangle*, mark size=2pt] table [x = {n}, y = {ForestsBFSExplicit1}] {\pltA};
        \addlegendentry{\textsc{ForestsBFS}}
        \addplot[blue, mark = diamond*, mark size=2pt] table [x = {n}, y = {PForestsBFS21}] {\pltA};
        \addlegendentry{\textsc{PForestsBFS}}
        \addplot[blue, mark = otimes, mark size=1.75pt] table [x = {n}, y = {DegMin0}] {\pltA};
        \addlegendentry{\textsc{DegMin}}
        \addplot[blue, mark = Mercedes star, mark size=2pt] table [x = {n}, y = {NodeDeg0}] {\pltA};
        \addlegendentry{\textsc{NDegMin}}
        \addplot[orange, mark = halfcircle, mark size=1.75pt] table [x = {n}, y = {NodeBasic0}] {\pltA};
        \addlegendentry{\textsc{NBasic}}
        \addplot[green, mark = heart, mark size=2pt] table [x = {n}, y = {NodeBasicComp1}] {\pltA};
        \addlegendentry{\textsc{NBasicComp}}
        \addplot[purple, mark = Mercedes star flipped, mark size=2pt] table [x = {n}, y = {NodeDegComp0}] {\pltA};
        \addlegendentry{\textsc{NDegMinComp}}
        \addplot[red, mark = otimes*, mark size=1.75pt] table [x = {n}, y = {NodeInDegMin1}] {\pltA};
        \addlegendentry{\textsc{NInDegMin}}
        \addplot[magenta, mark = oplus*, mark size=1.75pt] table [x = {n}, y = {NodeProcMinComp1}] {\pltA};
        \addlegendentry{\textsc{NProcMinComp}}
        \addplot[orange, mark = square*, mark size=1.5pt] table [x = {n}, y = {NodeInDegMinComp0}] {\pltA};
        \addlegendentry{\textsc{NInDegMinComp}}
        \addplot[purple, mark = square, mark size=1.5pt] table [x = {n}, y = {IncProcMin0}] {\pltA};
        \addlegendentry{\textsc{IncProcMin}}
        \addplot[brown, mark = 10-pointed star, mark size=2pt] table [x = {n}, y = {NodeProcMin0}] {\pltA};
        \addlegendentry{\textsc{NProcMin}}
        \addplot[brown, mark = triangle, mark size=2pt] table [x = {n}, y = {IncInDegMin0}] {\pltA};
        \addlegendentry{\textsc{IncInDegMin}}
        \addplot[red, mark = pentagon, mark size=2pt] table [x = {n}, y = {TranspOneOut0}] {\pltA};
        \addlegendentry{\textsc{TranspOne}}
        \addplot[magenta, mark = diamond, mark size=2pt] table [x = {n}, y = {Transp0}] {\pltA};
        \addlegendentry{\textsc{Transp}}
      \end{axis}
    \end{tikzpicture}
    \hspace{5mm}
  }
  \caption{Barab\'asi--Albert random graphs.
    As in Erd\H os--R\'enyi graphs, \textsc{Transp} and \textsc{TranspOne} continue to be the most efficient.}
\end{figure}
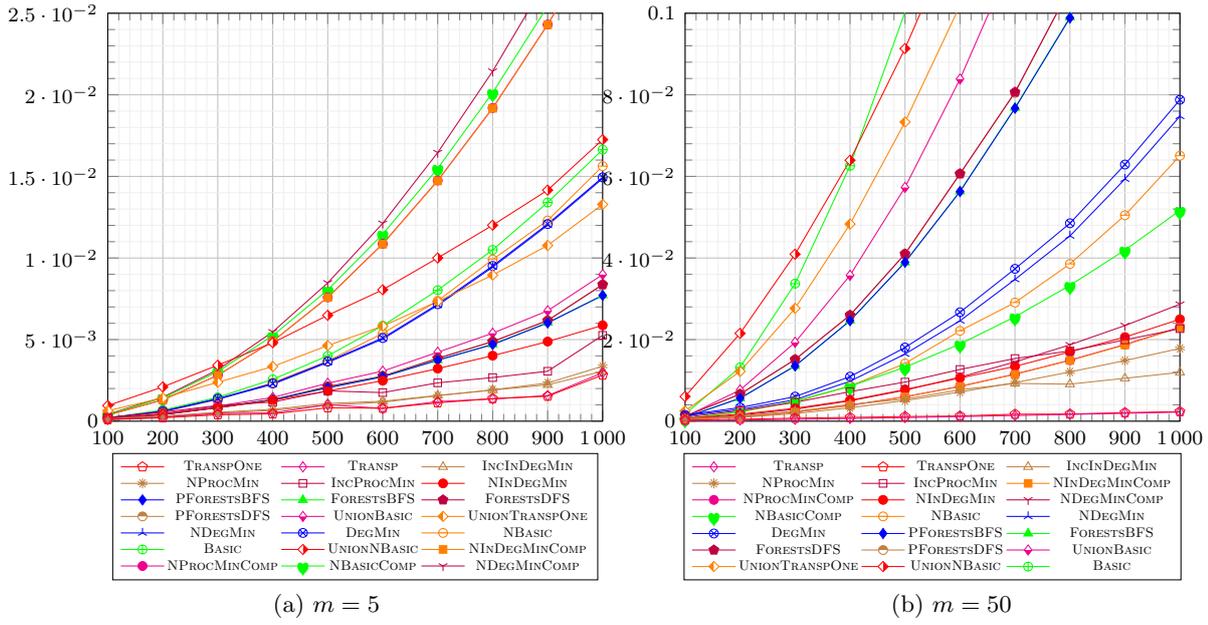

\begin{figure}[H]
  \setcounter{subfigure}{0}
  \centering
  \subfloat[Rigid graphs] {
    \hspace{5mm}
    \pgfplotstableread[col sep=comma]{./results/heuristics/rigids/Full/all.csv}{\pltA}
    \begin{tikzpicture}[scale=.95, trim axis left, trim axis right]
      \begin{axis}[
        xmin = 99.0, xmax = 1002.0,
        ymin = 0, ymax = 0.005,
        xtick distance = 200.0,
        x tick label style={/pgf/number format/1000 sep = \kern 0.15em},
        tick label style={font=\scriptsize},
        label style={font=\scriptsize},
        grid = both,
        minor tick num = 4,
        major grid style = {lightgray},
        minor grid style = {lightgray!25},
        legend style={nodes={scale=0.5, transform shape},legend pos = outer south,legend columns=3,reverse legend},
        xtick={100.0, 200.0, 300.0, 400.0, 500.0, 600.0, 700.0, 800.0, 900.0, 1000.0}
        ]
        \addplot[magenta, mark = oplus*, mark size=1.75pt] table [x = {n}, y = {NodeProcMinComp1}] {\pltA};
        \addlegendentry{\textsc{NProcMinComp}}
        \addplot[orange, mark = square*, mark size=1.5pt] table [x = {n}, y = {NodeInDegMinComp0}] {\pltA};
        \addlegendentry{\textsc{NInDegMinComp}}
        \addplot[purple, mark = Mercedes star flipped, mark size=2pt] table [x = {n}, y = {NodeDegComp0}] {\pltA};
        \addlegendentry{\textsc{NDegMinComp}}
        \addplot[green, mark = heart, mark size=2pt] table [x = {n}, y = {NodeBasicComp1}] {\pltA};
        \addlegendentry{\textsc{NBasicComp}}
        \addplot[red, mark = halfsquare right*, mark size=2pt] table [x = {n}, y = {UnionNodeBasic1}] {\pltA};
        \addlegendentry{\textsc{UnionNBasic}}
        \addplot[orange, mark = halfsquare left*, mark size=2pt] table [x = {n}, y = {UnionTranspOneOut1}] {\pltA};
        \addlegendentry{\textsc{UnionTranspOne}}
        \addplot[purple, mark = square, mark size=1.5pt] table [x = {n}, y = {IncProcMin0}] {\pltA};
        \addlegendentry{\textsc{IncProcMin}}
        \addplot[magenta, mark = halfdiamond*, mark size=2pt] table [x = {n}, y = {UnionBasic1}] {\pltA};
        \addlegendentry{\textsc{UnionBasic}}
        \addplot[blue, mark = otimes, mark size=1.75pt] table [x = {n}, y = {DegMin0}] {\pltA};
        \addlegendentry{\textsc{DegMin}}
        \addplot[blue, mark = Mercedes star, mark size=2pt] table [x = {n}, y = {NodeDeg0}] {\pltA};
        \addlegendentry{\textsc{NDegMin}}
        \addplot[brown, mark = triangle, mark size=2pt] table [x = {n}, y = {IncInDegMin0}] {\pltA};
        \addlegendentry{\textsc{IncInDegMin}}
        \addplot[green, mark = triangle*, mark size=2pt] table [x = {n}, y = {ForestsBFSExplicit1}] {\pltA};
        \addlegendentry{\textsc{ForestsBFS}}
        \addplot[blue, mark = diamond*, mark size=2pt] table [x = {n}, y = {PForestsBFS21}] {\pltA};
        \addlegendentry{\textsc{PForestsBFS}}
        \addplot[magenta, mark = diamond, mark size=2pt] table [x = {n}, y = {Transp0}] {\pltA};
        \addlegendentry{\textsc{Transp}}
        \addplot[brown, mark = halfcircle*, mark size=1.75pt] table [x = {n}, y = {PForestsDFS21}] {\pltA};
        \addlegendentry{\textsc{PForestsDFS}}
        \addplot[purple, mark = pentagon*, mark size=2pt] table [x = {n}, y = {ForestsDFSExplicit1}] {\pltA};
        \addlegendentry{\textsc{ForestsDFS}}
        \addplot[red, mark = pentagon, mark size=2pt] table [x = {n}, y = {TranspOneOut0}] {\pltA};
        \addlegendentry{\textsc{TranspOne}}
        \addplot[orange, mark = halfcircle, mark size=1.75pt] table [x = {n}, y = {NodeBasic0}] {\pltA};
        \addlegendentry{\textsc{NBasic}}
        \addplot[red, mark = otimes*, mark size=1.75pt] table [x = {n}, y = {NodeInDegMin1}] {\pltA};
        \addlegendentry{\textsc{NInDegMin}}
        \addplot[brown, mark = 10-pointed star, mark size=2pt] table [x = {n}, y = {NodeProcMin0}] {\pltA};
        \addlegendentry{\textsc{NProcMin}}
        \addplot[green, mark = oplus, mark size=1.75pt] table [x = {n}, y = {Basic1}] {\pltA};
        \addlegendentry{\textsc{Basic}}
      \end{axis}
    \end{tikzpicture}
  }%
  \hfill
  \subfloat[Tight graphs] {
    \pgfplotstableread[col sep=comma]{./results/heuristics/tights/Full/all.csv}{\pltA}
    \begin{tikzpicture}[scale=.95, trim axis left, trim axis right]
      \begin{axis}[
        xmin = 100.0, xmax = 1000.0,
        ymin = 0, ymax = 0.005,
        xtick distance = 200.0,
        x tick label style={/pgf/number format/1000 sep = \kern 0.15em},
        tick label style={font=\scriptsize},
        label style={font=\scriptsize},
        grid = both,
        minor tick num = 4,
        major grid style = {lightgray},
        minor grid style = {lightgray!25},
        legend style={nodes={scale=0.5, transform shape},legend pos = outer south,legend columns=3,reverse legend},
        xtick={100.0, 200.0, 300.0, 400.0, 500.0, 600.0, 700.0, 800.0, 900.0, 1000.0}
        ]
        \addplot[green, mark = heart, mark size=2pt] table [x = {n}, y = {NodeBasicComp1}] {\pltA};
        \addlegendentry{\textsc{NBasicComp}}
        \addplot[purple, mark = Mercedes star flipped, mark size=2pt] table [x = {n}, y = {NodeDegComp0}] {\pltA};
        \addlegendentry{\textsc{NDegMinComp}}
        \addplot[orange, mark = square*, mark size=1.5pt] table [x = {n}, y = {NodeInDegMinComp0}] {\pltA};
        \addlegendentry{\textsc{NInDegMinComp}}
        \addplot[magenta, mark = oplus*, mark size=1.75pt] table [x = {n}, y = {NodeProcMinComp1}] {\pltA};
        \addlegendentry{\textsc{NProcMinComp}}
        \addplot[red, mark = halfsquare right*, mark size=2pt] table [x = {n}, y = {UnionNodeBasic1}] {\pltA};
        \addlegendentry{\textsc{UnionNBasic}}
        \addplot[orange, mark = halfsquare left*, mark size=2pt] table [x = {n}, y = {UnionTranspOneOut1}] {\pltA};
        \addlegendentry{\textsc{UnionTranspOne}}
        \addplot[blue, mark = otimes, mark size=1.75pt] table [x = {n}, y = {DegMin0}] {\pltA};
        \addlegendentry{\textsc{DegMin}}
        \addplot[blue, mark = Mercedes star, mark size=2pt] table [x = {n}, y = {NodeDeg0}] {\pltA};
        \addlegendentry{\textsc{NDegMin}}
        \addplot[red, mark = otimes*, mark size=1.75pt] table [x = {n}, y = {NodeInDegMin1}] {\pltA};
        \addlegendentry{\textsc{NInDegMin}}
        \addplot[orange, mark = halfcircle, mark size=1.75pt] table [x = {n}, y = {NodeBasic0}] {\pltA};
        \addlegendentry{\textsc{NBasic}}
        \addplot[green, mark = oplus, mark size=1.75pt] table [x = {n}, y = {Basic1}] {\pltA};
        \addlegendentry{\textsc{Basic}}
        \addplot[magenta, mark = halfdiamond*, mark size=2pt] table [x = {n}, y = {UnionBasic1}] {\pltA};
        \addlegendentry{\textsc{UnionBasic}}
        \addplot[green, mark = triangle*, mark size=2pt] table [x = {n}, y = {ForestsBFSExplicit1}] {\pltA};
        \addlegendentry{\textsc{ForestsBFS}}
        \addplot[blue, mark = diamond*, mark size=2pt] table [x = {n}, y = {PForestsBFS21}] {\pltA};
        \addlegendentry{\textsc{PForestsBFS}}
        \addplot[purple, mark = square, mark size=1.5pt] table [x = {n}, y = {IncProcMin0}] {\pltA};
        \addlegendentry{\textsc{IncProcMin}}
        \addplot[brown, mark = halfcircle*, mark size=1.75pt] table [x = {n}, y = {PForestsDFS21}] {\pltA};
        \addlegendentry{\textsc{PForestsDFS}}
        \addplot[brown, mark = 10-pointed star, mark size=2pt] table [x = {n}, y = {NodeProcMin0}] {\pltA};
        \addlegendentry{\textsc{NProcMin}}
        \addplot[purple, mark = pentagon*, mark size=2pt] table [x = {n}, y = {ForestsDFSExplicit1}] {\pltA};
        \addlegendentry{\textsc{ForestsDFS}}
        \addplot[brown, mark = triangle, mark size=2pt] table [x = {n}, y = {IncInDegMin0}] {\pltA};
        \addlegendentry{\textsc{IncInDegMin}}
        \addplot[magenta, mark = diamond, mark size=2pt] table [x = {n}, y = {Transp0}] {\pltA};
        \addlegendentry{\textsc{Transp}}
        \addplot[red, mark = pentagon, mark size=2pt] table [x = {n}, y = {TranspOneOut0}] {\pltA};
        \addlegendentry{\textsc{TranspOne}}
      \end{axis}
    \end{tikzpicture}
    \hspace{5mm}
  }
  \caption{Rigid and tight graphs.
    For rigid graphs, the \textsc{Basic} heuristic performs best, suggesting that additional preprocessing and edge-ordering heuristics are unnecessary.
    In contrast, for $(k, k)$-tight graphs, \textsc{Transp} and \textsc{TranspOne} show superior efficiency.}
\end{figure}
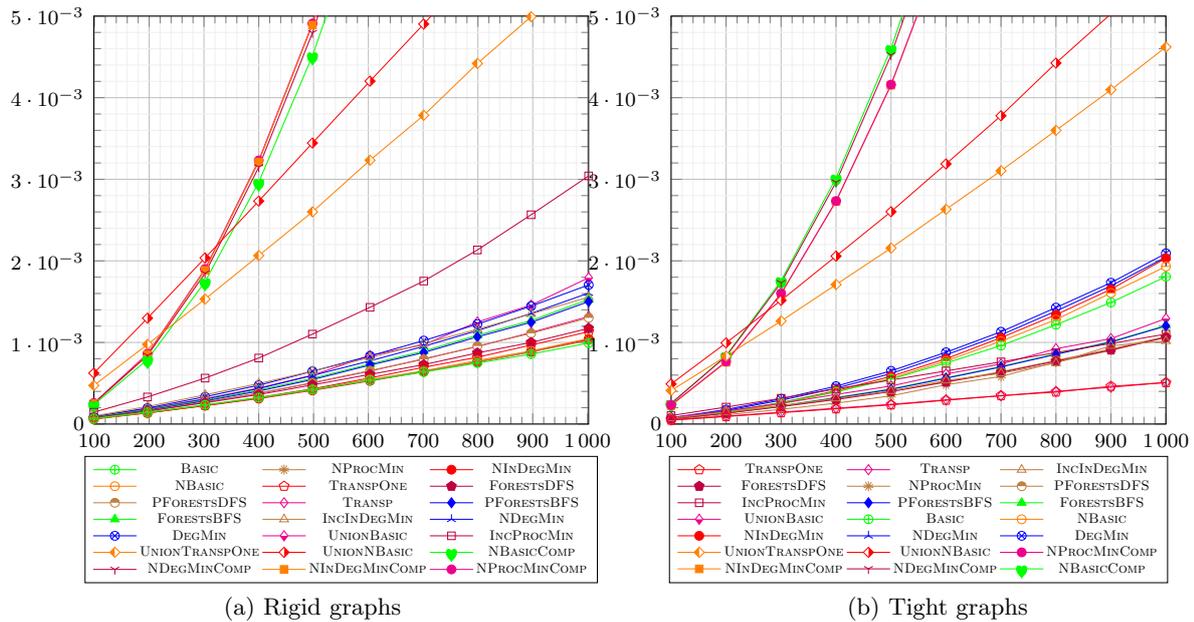

\begin{figure}[H]
  \setcounter{subfigure}{0}
  \centering
  \subfloat[Molecular graphs] {
    \hspace{5mm}
    \pgfplotstableread[col sep=comma]{./results/heuristics/moleculars/Full/all.csv}{\pltA}
    \begin{tikzpicture}[scale=.95, trim axis left, trim axis right]
      \begin{axis}[
        xmin = 8.0, xmax = 99.0,
        ymin = 0, ymax = 0.0003,
        xtick distance = 10.0,
        x tick label style={/pgf/number format/1000 sep = \kern 0.15em},
        tick label style={font=\scriptsize},
        label style={font=\scriptsize},
        grid = both,
        minor tick num = 5,
        major grid style = {lightgray},
        minor grid style = {lightgray!25},
        legend style={nodes={scale=0.5, transform shape},legend pos = outer south,legend columns=3,reverse legend},
        xtick={8.0, 20.0, 30.0, 40.0, 50.0, 60.0, 70.0, 80.0, 90.0, 99.0},
        minor xtick={8, 10, ..., 98}
        ]
        \addplot[red, mark = halfsquare right*, mark size=2pt] table [x = {n}, y = {UnionNodeBasic1}] {\pltA};
        \addlegendentry{\textsc{UnionNBasic}}
        \addplot[orange, mark = halfsquare left*, mark size=2pt] table [x = {n}, y = {UnionTranspOneOut1}] {\pltA};
        \addlegendentry{\textsc{UnionTranspOne}}
        \addplot[purple, mark = square, mark size=1.5pt] table [x = {n}, y = {IncProcMin0}] {\pltA};
        \addlegendentry{\textsc{IncProcMin}}
        \addplot[purple, mark = Mercedes star flipped, mark size=2pt] table [x = {n}, y = {NodeDegComp0}] {\pltA};
        \addlegendentry{\textsc{NDegMinComp}}
        \addplot[green, mark = heart, mark size=2pt] table [x = {n}, y = {NodeBasicComp1}] {\pltA};
        \addlegendentry{\textsc{NBasicComp}}
        \addplot[magenta, mark = oplus*, mark size=1.75pt] table [x = {n}, y = {NodeProcMinComp1}] {\pltA};
        \addlegendentry{\textsc{NProcMinComp}}
        \addplot[orange, mark = square*, mark size=1.5pt] table [x = {n}, y = {NodeInDegMinComp0}] {\pltA};
        \addlegendentry{\textsc{NInDegMinComp}}
        \addplot[brown, mark = triangle, mark size=2pt] table [x = {n}, y = {IncInDegMin0}] {\pltA};
        \addlegendentry{\textsc{IncInDegMin}}
        \addplot[magenta, mark = diamond, mark size=2pt] table [x = {n}, y = {Transp0}] {\pltA};
        \addlegendentry{\textsc{Transp}}
        \addplot[blue, mark = otimes, mark size=1.75pt] table [x = {n}, y = {DegMin0}] {\pltA};
        \addlegendentry{\textsc{DegMin}}
        \addplot[red, mark = pentagon, mark size=2pt] table [x = {n}, y = {TranspOneOut0}] {\pltA};
        \addlegendentry{\textsc{TranspOne}}
        \addplot[magenta, mark = halfdiamond*, mark size=2pt] table [x = {n}, y = {UnionBasic1}] {\pltA};
        \addlegendentry{\textsc{UnionBasic}}
        \addplot[blue, mark = Mercedes star, mark size=2pt] table [x = {n}, y = {NodeDeg0}] {\pltA};
        \addlegendentry{\textsc{NDegMin}}
        \addplot[purple, mark = pentagon*, mark size=2pt] table [x = {n}, y = {ForestsDFSExplicit1}] {\pltA};
        \addlegendentry{\textsc{ForestsDFS}}
        \addplot[green, mark = triangle*, mark size=2pt] table [x = {n}, y = {ForestsBFSExplicit1}] {\pltA};
        \addlegendentry{\textsc{ForestsBFS}}
        \addplot[red, mark = otimes*, mark size=1.75pt] table [x = {n}, y = {NodeInDegMin1}] {\pltA};
        \addlegendentry{\textsc{NInDegMin}}
        \addplot[brown, mark = 10-pointed star, mark size=2pt] table [x = {n}, y = {NodeProcMin0}] {\pltA};
        \addlegendentry{\textsc{NProcMin}}
        \addplot[brown, mark = halfcircle*, mark size=1.75pt] table [x = {n}, y = {PForestsDFS21}] {\pltA};
        \addlegendentry{\textsc{PForestsDFS}}
        \addplot[green, mark = oplus, mark size=1.75pt] table [x = {n}, y = {Basic1}] {\pltA};
        \addlegendentry{\textsc{Basic}}
        \addplot[orange, mark = halfcircle, mark size=1.75pt] table [x = {n}, y = {NodeBasic0}] {\pltA};
        \addlegendentry{\textsc{NBasic}}
        \addplot[blue, mark = diamond*, mark size=2pt] table [x = {n}, y = {PForestsBFS21}] {\pltA};
        \addlegendentry{\textsc{PForestsBFS}}
      \end{axis}
    \end{tikzpicture}
  }%
  \hfill
  \subfloat[Protein graphs] {
    \pgfplotstableread[col sep=comma]{./results/heuristics/proteins/Full/all.csv}{\pltA}
    \begin{tikzpicture}[scale=.95, trim axis left, trim axis right]
      \begin{axis}[
        xmin = 535.0, xmax = 10081.0,
        ymin = 0, ymax = 0.05,
        xtick distance = 2000.0,
        x tick label style={/pgf/number format/1000 sep = \kern 0.15em},
        tick label style={font=\scriptsize},
        label style={font=\scriptsize},
        grid = both,
        minor tick num = 4,
        major grid style = {lightgray},
        minor grid style = {lightgray!25},
        legend style={nodes={scale=0.5, transform shape},legend pos = outer south,legend columns=3,reverse legend},
        xtick={535.0, 2500.0, 5000.0, 7500.0, 10081.0},
        minor xtick={750.0, 1000.0, ..., 10000.00}
        ]
        \pgfplotsset{scaled x ticks=false}
        \addplot[purple, mark = Mercedes star flipped, mark size=2pt] table [x = {n}, y = {NodeDegComp0}] {\pltA};
        \addlegendentry{\textsc{NDegMinComp}}
        \addplot[orange, mark = square*, mark size=1.5pt] table [x = {n}, y = {NodeInDegMinComp0}] {\pltA};
        \addlegendentry{\textsc{NInDegMinComp}}
        \addplot[magenta, mark = oplus*, mark size=1.75pt] table [x = {n}, y = {NodeProcMinComp1}] {\pltA};
        \addlegendentry{\textsc{NProcMinComp}}
        \addplot[green, mark = heart, mark size=2pt] table [x = {n}, y = {NodeBasicComp1}] {\pltA};
        \addlegendentry{\textsc{NBasicComp}}
        \addplot[red, mark = halfsquare right*, mark size=2pt] table [x = {n}, y = {UnionNodeBasic1}] {\pltA};
        \addlegendentry{\textsc{UnionNBasic}}
        \addplot[purple, mark = square, mark size=1.5pt] table [x = {n}, y = {IncProcMin0}] {\pltA};
        \addlegendentry{\textsc{IncProcMin}}
        \addplot[orange, mark = halfsquare left*, mark size=2pt] table [x = {n}, y = {UnionTranspOneOut1}] {\pltA};
        \addlegendentry{\textsc{UnionTranspOne}}
        \addplot[magenta, mark = diamond, mark size=2pt] table [x = {n}, y = {Transp0}] {\pltA};
        \addlegendentry{\textsc{Transp}}
        \addplot[brown, mark = triangle, mark size=2pt] table [x = {n}, y = {IncInDegMin0}] {\pltA};
        \addlegendentry{\textsc{IncInDegMin}}
        \addplot[brown, mark = 10-pointed star, mark size=2pt] table [x = {n}, y = {NodeProcMin0}] {\pltA};
        \addlegendentry{\textsc{NProcMin}}
        \addplot[red, mark = otimes*, mark size=1.75pt] table [x = {n}, y = {NodeInDegMin1}] {\pltA};
        \addlegendentry{\textsc{NInDegMin}}
        \addplot[red, mark = pentagon, mark size=2pt] table [x = {n}, y = {TranspOneOut0}] {\pltA};
        \addlegendentry{\textsc{TranspOne}}
        \addplot[blue, mark = otimes, mark size=1.75pt] table [x = {n}, y = {DegMin0}] {\pltA};
        \addlegendentry{\textsc{DegMin}}
        \addplot[magenta, mark = halfdiamond*, mark size=2pt] table [x = {n}, y = {UnionBasic1}] {\pltA};
        \addlegendentry{\textsc{UnionBasic}}
        \addplot[brown, mark = halfcircle*, mark size=1.75pt] table [x = {n}, y = {PForestsDFS21}] {\pltA};
        \addlegendentry{\textsc{PForestsDFS}}
        \addplot[green, mark = triangle*, mark size=2pt] table [x = {n}, y = {ForestsBFSExplicit1}] {\pltA};
        \addlegendentry{\textsc{ForestsBFS}}
        \addplot[green, mark = oplus, mark size=1.75pt] table [x = {n}, y = {Basic1}] {\pltA};
        \addlegendentry{\textsc{Basic}}
        \addplot[purple, mark = pentagon*, mark size=2pt] table [x = {n}, y = {ForestsDFSExplicit1}] {\pltA};
        \addlegendentry{\textsc{ForestsDFS}}
        \addplot[blue, mark = Mercedes star, mark size=2pt] table [x = {n}, y = {NodeDeg0}] {\pltA};
        \addlegendentry{\textsc{NDegMin}}
        \addplot[orange, mark = halfcircle, mark size=1.75pt] table [x = {n}, y = {NodeBasic0}] {\pltA};
        \addlegendentry{\textsc{NBasic}}
        \addplot[blue, mark = diamond*, mark size=2pt] table [x = {n}, y = {PForestsBFS21}] {\pltA};
        \addlegendentry{\textsc{PForestsBFS}}
      \end{axis}
    \end{tikzpicture}
    \hspace{5mm}
  }
  \caption{Molecular and protein graphs.
    The \textsc{PForestsBFS} heuristic outperforms all others, highlighting the advantage of preprocessing steps that extract disjoint pseudoforests in real-world problem instances.}
\end{figure}
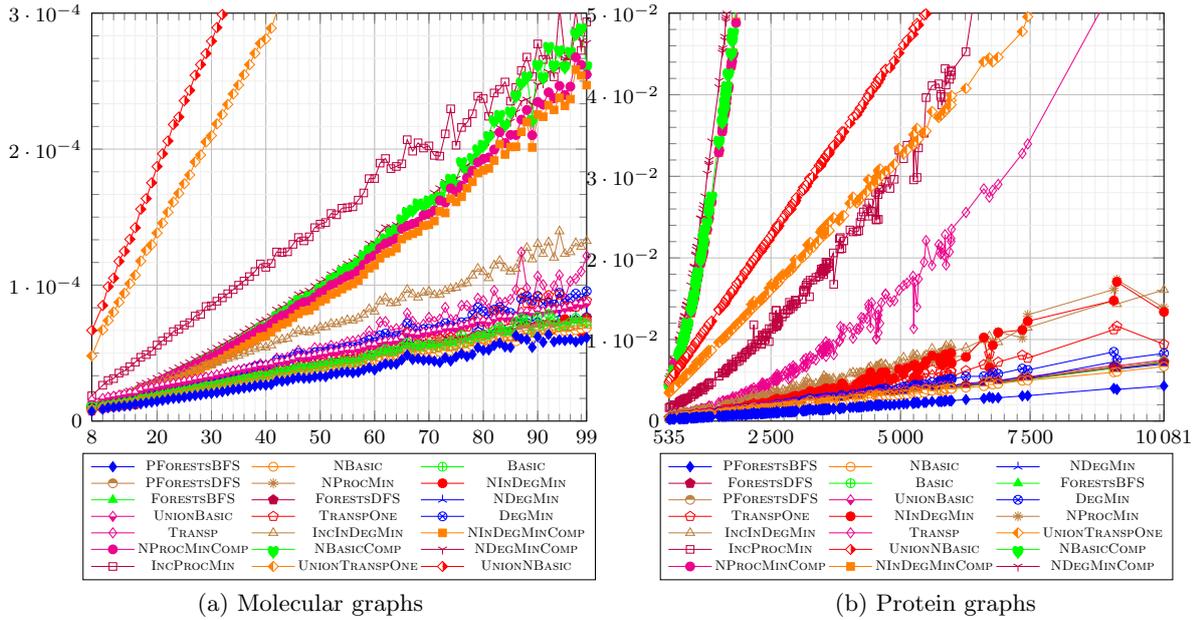

The benchmark results demonstrate that selecting edge-ordering heuristics based on graph properties significantly enhances performance in practice.
The \textsc{Transp} and \textsc{TranspOne} heuristics achieve the best results for Erd\H os--R\'enyi, Barab\'asi--Albert, and Tight graphs.
For dense random graphs, \textsc{NBasic}, \textsc{NProcMin}, and \textsc{NInDegMin} perform best.
The \textsc{PForestsBFS} heuristic is particularly effective for molecular and protein graphs.
Overall, \textsc{Transp} and \textsc{TranspOne} prove to be the most robust heuristics across diverse graph types, while \textsc{PForestsBFS} offers the best performance for molecular and protein graphs.

\medskip
Note that the component-based algorithm has a theoretical complexity of $O(n^2 + m)$, whereas the augmenting path method takes $O(nm)$ steps, making the component-based approach preferable for dense graphs.
In practice, however, the augmenting path method is more efficient when heuristics are used to optimize edge processing order.

\medskip
Next, we compare our implementation against two existing solutions: \textsc{KINARI}~\cite{kinariArticle, kinari} and \textsc{RAugm}~\cite{KiralyMihalykoGlobalrigidityAugmentation, rigidityAugm}.
Both competitors implement the component-based algorithm, just as our heuristic called \textsc{NInDegMinComp}.

\newpage
\begin{figure}[H]
  \setcounter{subfigure}{0}
  \centering
  \subfloat[$p=0.05$] {
    \hspace{5mm}
    \pgfplotstableread[col sep=comma]{./results/competitors/erdos_renyi/multiplicity_1/prob_0.05/Full/all.csv}{\pltA}
    \begin{tikzpicture}[scale=.95, trim axis left, trim axis right]
      \begin{axis}[
        xmin = 100.0, xmax = 1000.0,
        ymin = 0, ymax = 0.25,
        xtick distance = 200.0,
        x tick label style={/pgf/number format/1000 sep = \kern 0.15em},
        tick label style={font=\scriptsize},
        label style={font=\scriptsize},
        grid = both,
        minor tick num = 4,
        major grid style = {lightgray},
        minor grid style = {lightgray!25},
        legend style={nodes={scale=0.5, transform shape},legend pos = outer south,legend columns=3,reverse legend},
        xtick={100.0, 200.0, 300.0, 400.0, 500.0, 600.0, 700.0, 800.0, 900.0, 1000.0}
        ]
        \addplot[red, mark = pentagon, mark size=2pt] table [x = {n}, y = {Kinari}] {\pltA};
        \addlegendentry{\textsc{KINARI}}
        \addplot[brown, mark = triangle, mark size=2pt] table [x = {n}, y = {Augmentation1}] {\pltA};
        \addlegendentry{\textsc{RAugm}}
        \addplot[green, mark = oplus, mark size=1.75pt] table [x = {n}, y = {Basic1}] {\pltA};
        \addlegendentry{\textsc{Basic}}
        \addplot[blue, mark = diamond*, mark size=2pt] table [x = {n}, y = {PForestsBFS21}] {\pltA};
        \addlegendentry{\textsc{PForestsBFS}}
        \addplot[orange, mark = square*, mark size=1.5pt] table [x = {n}, y = {NodeInDegMinComp0}] {\pltA};
        \addlegendentry{\textsc{NInDegMinComp}}
        \addplot[magenta, mark = diamond, mark size=2pt] table [x = {n}, y = {Transp0}] {\pltA};
        \addlegendentry{\textsc{Transp}}
      \end{axis}
    \end{tikzpicture}
  }%
  \hfill
  \subfloat[$p=0.1$] {
    \pgfplotstableread[col sep=comma]{./results/competitors/erdos_renyi/multiplicity_1/prob_0.10/Full/all.csv}{\pltA}
    \begin{tikzpicture}[scale=.95, trim axis left, trim axis right]
      \begin{axis}[
        xmin = 100.0, xmax = 1000.0,
        ymin = 0, ymax = 0.25,
        xtick distance = 200.0,
        x tick label style={/pgf/number format/1000 sep = \kern 0.15em},
        tick label style={font=\scriptsize},
        label style={font=\scriptsize},
        grid = both,
        minor tick num = 4,
        major grid style = {lightgray},
        minor grid style = {lightgray!25},
        legend style={nodes={scale=0.5, transform shape},legend pos = outer south,legend columns=3,reverse legend},
        xtick={100.0, 200.0, 300.0, 400.0, 500.0, 600.0, 700.0, 800.0, 900.0, 1000.0}
        ]
        \addplot[red, mark = pentagon, mark size=2pt] table [x = {n}, y = {Kinari}] {\pltA};
        \addlegendentry{\textsc{KINARI}}
        \addplot[brown, mark = triangle, mark size=2pt] table [x = {n}, y = {Augmentation1}] {\pltA};
        \addlegendentry{\textsc{RAugm}}
        \addplot[green, mark = oplus, mark size=1.75pt] table [x = {n}, y = {Basic1}] {\pltA};
        \addlegendentry{\textsc{Basic}}
        \addplot[blue, mark = diamond*, mark size=2pt] table [x = {n}, y = {PForestsBFS21}] {\pltA};
        \addlegendentry{\textsc{PForestsBFS}}
        \addplot[orange, mark = square*, mark size=1.5pt] table [x = {n}, y = {NodeInDegMinComp0}] {\pltA};
        \addlegendentry{\textsc{NInDegMinComp}}
        \addplot[magenta, mark = diamond, mark size=2pt] table [x = {n}, y = {Transp0}] {\pltA};
        \addlegendentry{\textsc{Transp}}
      \end{axis}
    \end{tikzpicture}
    \hspace{5mm}
  }
  \newline
  \centering
  \subfloat[$p=0.5$] {
    \hspace{5mm}
    \pgfplotstableread[col sep=comma]{./results/competitors/erdos_renyi/multiplicity_1/prob_0.50/Full/all.csv}{\pltA}
    \begin{tikzpicture}[scale=.95, trim axis left, trim axis right]
      \begin{axis}[
        xmin = 100.0, xmax = 1000.0,
        ymin = 0, ymax = 0.5,
        xtick distance = 200.0,
        x tick label style={/pgf/number format/1000 sep = \kern 0.15em},
        tick label style={font=\scriptsize},
        label style={font=\scriptsize},
        grid = both,
        minor tick num = 4,
        major grid style = {lightgray},
        minor grid style = {lightgray!25},
        legend style={nodes={scale=0.5, transform shape},legend pos = outer south,legend columns=3,reverse legend},
        xtick={100.0, 200.0, 300.0, 400.0, 500.0, 600.0, 700.0, 800.0, 900.0, 1000.0}
        ]
        \addplot[red, mark = pentagon, mark size=2pt] table [x = {n}, y = {Kinari}] {\pltA};
        \addlegendentry{\textsc{KINARI}}
        \addplot[brown, mark = triangle, mark size=2pt] table [x = {n}, y = {Augmentation1}] {\pltA};
        \addlegendentry{\textsc{RAugm}}
        \addplot[green, mark = oplus, mark size=1.75pt] table [x = {n}, y = {Basic1}] {\pltA};
        \addlegendentry{\textsc{Basic}}
        \addplot[blue, mark = diamond*, mark size=2pt] table [x = {n}, y = {PForestsBFS21}] {\pltA};
        \addlegendentry{\textsc{PForestsBFS}}
        \addplot[orange, mark = square*, mark size=1.5pt] table [x = {n}, y = {NodeInDegMinComp0}] {\pltA};
        \addlegendentry{\textsc{NInDegMinComp}}
        \addplot[magenta, mark = diamond, mark size=2pt] table [x = {n}, y = {Transp0}] {\pltA};
        \addlegendentry{\textsc{Transp}}
      \end{axis}
    \end{tikzpicture}
  }%
  \hfill
  \subfloat[$p=0.9$] {
    \pgfplotstableread[col sep=comma]{./results/competitors/erdos_renyi/multiplicity_1/prob_0.90/Full/all.csv}{\pltA}
    \begin{tikzpicture}[scale=.95, trim axis left, trim axis right]
      \begin{axis}[
        xmin = 100.0, xmax = 1000.0,
        ymin = 0, ymax = 0.5,
        xtick distance = 200.0,
        x tick label style={/pgf/number format/1000 sep = \kern 0.15em},
        tick label style={font=\scriptsize},
        label style={font=\scriptsize},
        grid = both,
        minor tick num = 4,
        major grid style = {lightgray},
        minor grid style = {lightgray!25},
        legend style={nodes={scale=0.5, transform shape},legend pos = outer south,legend columns=3,reverse legend},
        xtick={100.0, 200.0, 300.0, 400.0, 500.0, 600.0, 700.0, 800.0, 900.0, 1000.0}
        ]
        \addplot[red, mark = pentagon, mark size=2pt] table [x = {n}, y = {Kinari}] {\pltA};
        \addlegendentry{\textsc{KINARI}}
        \addplot[brown, mark = triangle, mark size=2pt] table [x = {n}, y = {Augmentation1}] {\pltA};
        \addlegendentry{\textsc{RAugm}}
        \addplot[green, mark = oplus, mark size=1.75pt] table [x = {n}, y = {Basic1}] {\pltA};
        \addlegendentry{\textsc{Basic}}
        \addplot[blue, mark = diamond*, mark size=2pt] table [x = {n}, y = {PForestsBFS21}] {\pltA};
        \addlegendentry{\textsc{PForestsBFS}}
        \addplot[orange, mark = square*, mark size=1.5pt] table [x = {n}, y = {NodeInDegMinComp0}] {\pltA};
        \addlegendentry{\textsc{NInDegMinComp}}
        \addplot[magenta, mark = diamond, mark size=2pt] table [x = {n}, y = {Transp0}] {\pltA};
        \addlegendentry{\textsc{Transp}}
      \end{axis}
    \end{tikzpicture}
    \hspace{5mm}
  }
  \caption{Erd\H os--R\'enyi random graphs.
    Although \textsc{NInDegMinComp}, \textsc{RAugm}, and \textsc{KINARI} implement the same component-based approach, their efficiency varies significantly.
    \textsc{NInDegMinComp} is the fastest among them, but \textsc{Transp} outperforms it, particularly for sparse graphs.}
\end{figure}
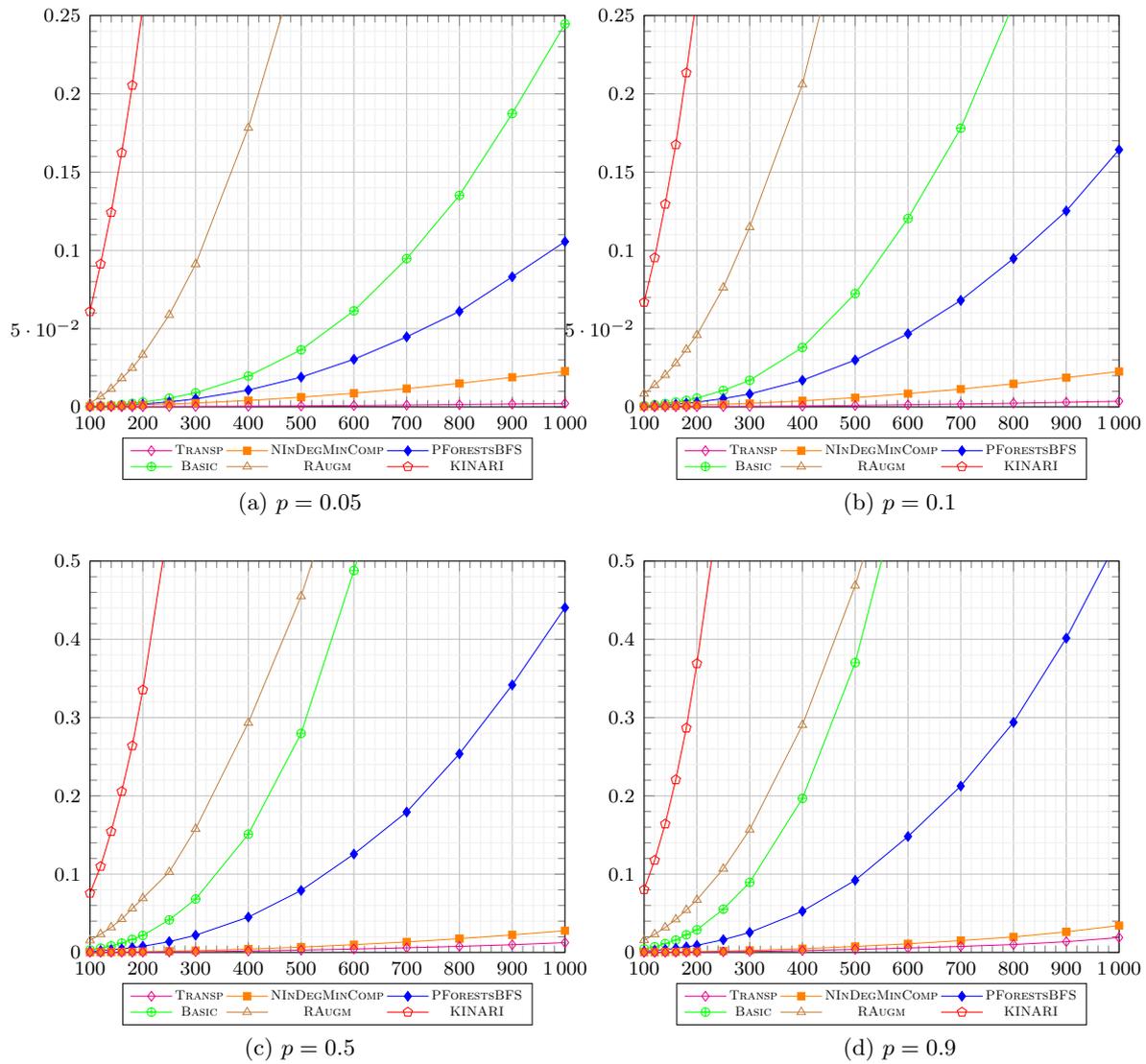

\begin{figure}[H]
  \setcounter{subfigure}{0}
  \centering
  \subfloat[$m=5$] {
    \hspace{5mm}
    \pgfplotstableread[col sep=comma]{./results/competitors/barabasi_albert/edges_5/Full/all.csv}{\pltA}
    \begin{tikzpicture}[scale=.95, trim axis left, trim axis right]
      \begin{axis}[
        xmin = 100.0, xmax = 1000.0,
        ymin = 0, ymax = 0.025,
        xtick distance = 200.0,
        x tick label style={/pgf/number format/1000 sep = \kern 0.15em},
        tick label style={font=\scriptsize},
        label style={font=\scriptsize},
        grid = both,
        minor tick num = 4,
        major grid style = {lightgray},
        minor grid style = {lightgray!25},
        legend style={nodes={scale=0.5, transform shape},legend pos = outer south,legend columns=3,reverse legend},
        xtick={100.0, 200.0, 300.0, 400.0, 500.0, 600.0, 700.0, 800.0, 900.0, 1000.0}
        ]
        \addplot[red, mark = pentagon, mark size=2pt] table [x = {n}, y = {Kinari}] {\pltA};
        \addlegendentry{\textsc{KINARI}}
        \addplot[brown, mark = triangle, mark size=2pt] table [x = {n}, y = {Augmentation1}] {\pltA};
        \addlegendentry{\textsc{RAugm}}
        \addplot[orange, mark = square*, mark size=1.5pt] table [x = {n}, y = {NodeInDegMinComp0}] {\pltA};
        \addlegendentry{\textsc{NInDegMinComp}}
        \addplot[green, mark = oplus, mark size=1.75pt] table [x = {n}, y = {Basic1}] {\pltA};
        \addlegendentry{\textsc{Basic}}
        \addplot[blue, mark = diamond*, mark size=2pt] table [x = {n}, y = {PForestsBFS21}] {\pltA};
        \addlegendentry{\textsc{PForestsBFS}}
        \addplot[magenta, mark = diamond, mark size=2pt] table [x = {n}, y = {Transp0}] {\pltA};
        \addlegendentry{\textsc{Transp}}
      \end{axis}
    \end{tikzpicture}
  }%
  \hfill
  \subfloat[$m=50$] {
    \pgfplotstableread[col sep=comma]{./results/competitors/barabasi_albert/edges_50/Full/all.csv}{\pltA}
    \begin{tikzpicture}[scale=.95, trim axis left, trim axis right]
      \begin{axis}[
        xmin = 100.0, xmax = 1000.0,
        ymin = 0, ymax = 0.1,
        xtick distance = 200.0,
        x tick label style={/pgf/number format/1000 sep = \kern 0.15em},
        tick label style={font=\scriptsize},
        label style={font=\scriptsize},
        grid = both,
        minor tick num = 4,
        major grid style = {lightgray},
        minor grid style = {lightgray!25},
        legend style={nodes={scale=0.5, transform shape},legend pos = outer south,legend columns=3,reverse legend},
        xtick={100.0, 200.0, 300.0, 400.0, 500.0, 600.0, 700.0, 800.0, 900.0, 1000.0}
        ]
        \addplot[red, mark = pentagon, mark size=2pt] table [x = {n}, y = {Kinari}] {\pltA};
        \addlegendentry{\textsc{KINARI}}
        \addplot[brown, mark = triangle, mark size=2pt] table [x = {n}, y = {Augmentation1}] {\pltA};
        \addlegendentry{\textsc{RAugm}}
        \addplot[green, mark = oplus, mark size=1.75pt] table [x = {n}, y = {Basic1}] {\pltA};
        \addlegendentry{\textsc{Basic}}
        \addplot[blue, mark = diamond*, mark size=2pt] table [x = {n}, y = {PForestsBFS21}] {\pltA};
        \addlegendentry{\textsc{PForestsBFS}}
        \addplot[orange, mark = square*, mark size=1.5pt] table [x = {n}, y = {NodeInDegMinComp0}] {\pltA};
        \addlegendentry{\textsc{NInDegMinComp}}
        \addplot[magenta, mark = diamond, mark size=2pt] table [x = {n}, y = {Transp0}] {\pltA};
        \addlegendentry{\textsc{Transp}}
      \end{axis}
    \end{tikzpicture}
    \hspace{5mm}
  }
  \caption{Barab\'asi--Albert random graphs.
    As with Erd\H os--R\'enyi graphs, \textsc{Transp} is the most efficient.
    The running time of \textsc{KINARI} was $6 \cdot 10^{-2}$ seconds for 100 nodes with $m=5$, so it is omitted from the figure.}
\end{figure}
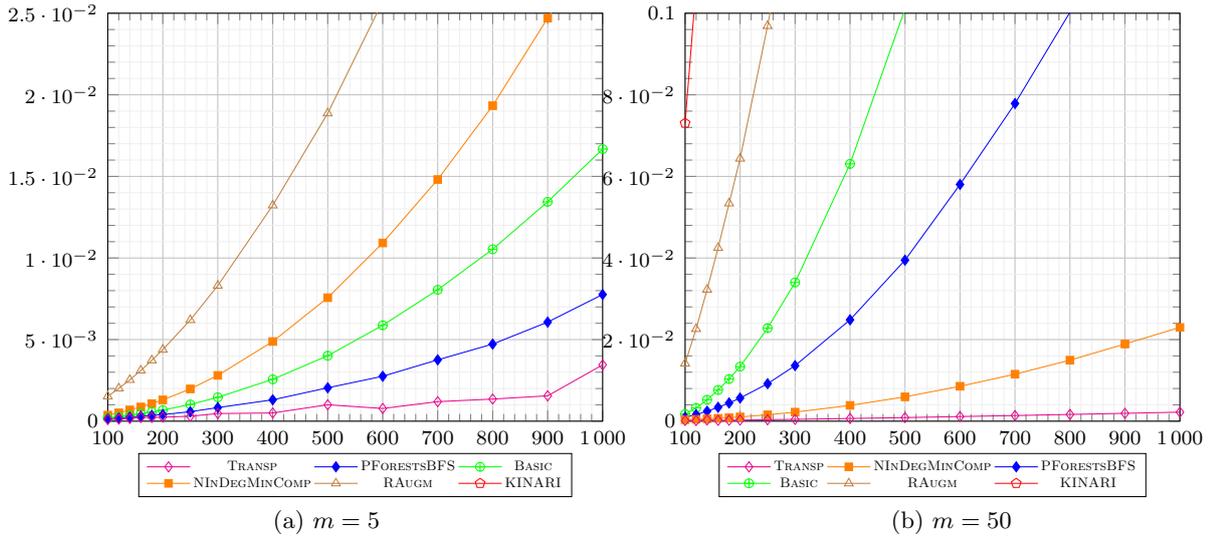

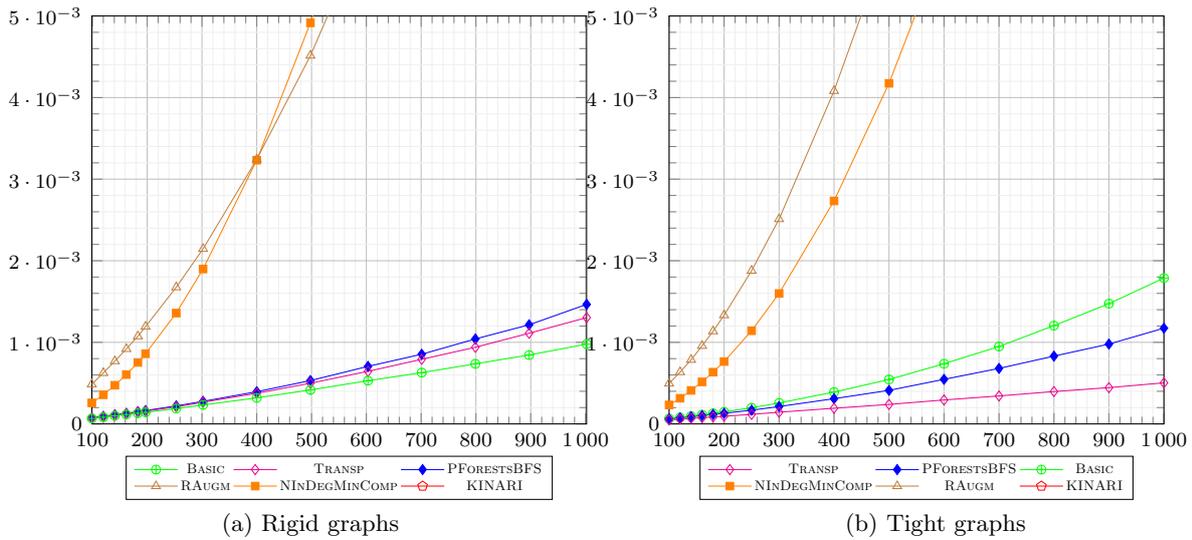
\begin{figure}[H]
  \setcounter{subfigure}{0}
  \centering
  \subfloat[Rigid graphs] {
    \hspace{5mm}
    \pgfplotstableread[col sep=comma]{./results/competitors/rigids/Full/all.csv}{\pltA}
    \begin{tikzpicture}[scale=.95, trim axis left, trim axis right]
      \begin{axis}[
        xmin = 99.0, xmax = 1002.0,
        ymin = 0, ymax = 0.005,
        xtick distance = 200.0,
        x tick label style={/pgf/number format/1000 sep = \kern 0.15em},
        tick label style={font=\scriptsize},
        label style={font=\scriptsize},
        grid = both,
        minor tick num = 4,
        major grid style = {lightgray},
        minor grid style = {lightgray!25},
        legend style={nodes={scale=0.5, transform shape},legend pos = outer south,legend columns=3,reverse legend},
        xtick={100.0, 200.0, 300.0, 400.0, 500.0, 600.0, 700.0, 800.0, 900.0, 1000.0}
        ]
        \addplot[red, mark = pentagon, mark size=2pt] table [x = {n}, y = {Kinari}] {\pltA};
        \addlegendentry{\textsc{KINARI}}
        \addplot[orange, mark = square*, mark size=1.5pt] table [x = {n}, y = {NodeInDegMinComp0}] {\pltA};
        \addlegendentry{\textsc{NInDegMinComp}}
        \addplot[brown, mark = triangle, mark size=2pt] table [x = {n}, y = {Augmentation1}] {\pltA};
        \addlegendentry{\textsc{RAugm}}
        \addplot[blue, mark = diamond*, mark size=2pt] table [x = {n}, y = {PForestsBFS21}] {\pltA};
        \addlegendentry{\textsc{PForestsBFS}}
        \addplot[magenta, mark = diamond, mark size=2pt] table [x = {n}, y = {Transp0}] {\pltA};
        \addlegendentry{\textsc{Transp}}
        \addplot[green, mark = oplus, mark size=1.75pt] table [x = {n}, y = {Basic1}] {\pltA};
        \addlegendentry{\textsc{Basic}}
      \end{axis}
    \end{tikzpicture}
  }%
  \hfill
  \subfloat[Tight graphs] {
    \pgfplotstableread[col sep=comma]{./results/competitors/tights/Full/all.csv}{\pltA}
    \begin{tikzpicture}[scale=.95, trim axis left, trim axis right]
      \begin{axis}[
        xmin = 100.0, xmax = 1000.0,
        ymin = 0, ymax = 0.005,
        xtick distance = 200.0,
        x tick label style={/pgf/number format/1000 sep = \kern 0.15em},
        tick label style={font=\scriptsize},
        label style={font=\scriptsize},
        grid = both,
        minor tick num = 4,
        major grid style = {lightgray},
        minor grid style = {lightgray!25},
        legend style={nodes={scale=0.5, transform shape},legend pos = outer south,legend columns=3,reverse legend},
        xtick={100.0, 200.0, 300.0, 400.0, 500.0, 600.0, 700.0, 800.0, 900.0, 1000.0}
        ]
        \addplot[red, mark = pentagon, mark size=2pt] table [x = {n}, y = {Kinari}] {\pltA};
        \addlegendentry{\textsc{KINARI}}
        \addplot[brown, mark = triangle, mark size=2pt] table [x = {n}, y = {Augmentation1}] {\pltA};
        \addlegendentry{\textsc{RAugm}}
        \addplot[orange, mark = square*, mark size=1.5pt] table [x = {n}, y = {NodeInDegMinComp0}] {\pltA};
        \addlegendentry{\textsc{NInDegMinComp}}
        \addplot[green, mark = oplus, mark size=1.75pt] table [x = {n}, y = {Basic1}] {\pltA};
        \addlegendentry{\textsc{Basic}}
        \addplot[blue, mark = diamond*, mark size=2pt] table [x = {n}, y = {PForestsBFS21}] {\pltA};
        \addlegendentry{\textsc{PForestsBFS}}
        \addplot[magenta, mark = diamond, mark size=2pt] table [x = {n}, y = {Transp0}] {\pltA};
        \addlegendentry{\textsc{Transp}}
      \end{axis}
    \end{tikzpicture}
    \hspace{5mm}
  }
  \caption{Rigid and tight graphs.
    For rigid graphs, \textsc{RAugm} outperforms \textsc{NInDegMinComp} for large instances.
    While \textsc{Basic} could not be improved for rigid graphs, \textsc{Transp} performs significantly better for dense $(k, k)$-tight graphs.
    The running time of \textsc{KINARI} was $3 \cdot 10^{-2}$ seconds for 100 nodes in both cases, so it is omitted from the figures.}
\end{figure}

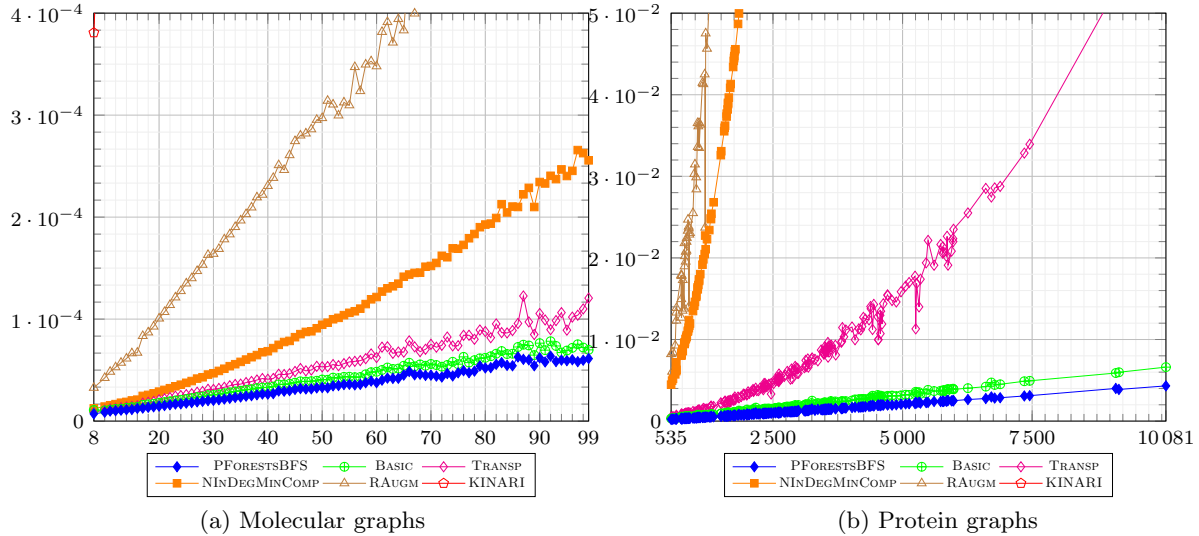
\begin{figure}[H]
  \setcounter{subfigure}{0}
  \centering
  \subfloat[Molecular graphs] {
    \hspace{5mm}
    \pgfplotstableread[col sep=comma]{./results/competitors/moleculars/Full/all.csv}{\pltA}
    \begin{tikzpicture}[scale=.95, trim axis left, trim axis right]
      \begin{axis}[
        xmin = 8.0, xmax = 99.0,
        ymin = 0, ymax = 0.0004,
        xtick distance = 10.0,
        x tick label style={/pgf/number format/1000 sep = \kern 0.15em},
        tick label style={font=\scriptsize},
        label style={font=\scriptsize},
        grid = both,
        minor tick num = 5,
        major grid style = {lightgray},
        minor grid style = {lightgray!25},
        legend style={nodes={scale=0.5, transform shape},legend pos = outer south,legend columns=3,reverse legend},
        xtick={8.0, 20.0, 30.0, 40.0, 50.0, 60.0, 70.0, 80.0, 90.0, 99.0},
        minor xtick={8, 10, ..., 98}
        ]
        \addplot[red, mark = pentagon, mark size=2pt] table [x = {n}, y = {Kinari}] {\pltA};
        \addlegendentry{\textsc{KINARI}}
        \addplot[brown, mark = triangle, mark size=2pt] table [x = {n}, y = {Augmentation1}] {\pltA};
        \addlegendentry{\textsc{RAugm}}
        \addplot[orange, mark = square*, mark size=1.5pt] table [x = {n}, y = {NodeInDegMinComp0}] {\pltA};
        \addlegendentry{\textsc{NInDegMinComp}}
        \addplot[magenta, mark = diamond, mark size=2pt] table [x = {n}, y = {Transp0}] {\pltA};
        \addlegendentry{\textsc{Transp}}
        \addplot[green, mark = oplus, mark size=1.75pt] table [x = {n}, y = {Basic1}] {\pltA};
        \addlegendentry{\textsc{Basic}}
        \addplot[blue, mark = diamond*, mark size=2pt] table [x = {n}, y = {PForestsBFS21}] {\pltA};
        \addlegendentry{\textsc{PForestsBFS}}
      \end{axis}
    \end{tikzpicture}
  }%
  \hfill
  \subfloat[Protein graphs] {
    \pgfplotstableread[col sep=comma]{./results/competitors/proteins/Full/all.csv}{\pltA}
    \begin{tikzpicture}[scale=.95, trim axis left, trim axis right]
      \begin{axis}[
        xmin = 535.0, xmax = 10081.0,
        ymin = 0, ymax = 0.05,
        xtick distance = 2000.0,
        x tick label style={/pgf/number format/1000 sep = \kern 0.15em},
        tick label style={font=\scriptsize},
        label style={font=\scriptsize},
        grid = both,
        minor tick num = 4,
        major grid style = {lightgray},
        minor grid style = {lightgray!25},
        legend style={nodes={scale=0.5, transform shape},legend pos = outer south,legend columns=3,reverse legend},
        xtick={535.0, 2500.0, 5000.0, 7500.0, 10081.0},
        minor xtick={750.0, 1000.0, ..., 10000.00}
        ]
        \pgfplotsset{scaled x ticks=false}
        \addplot[red, mark = pentagon, mark size=2pt] table [x = {n}, y = {Kinari}] {\pltA};
        \addlegendentry{\textsc{KINARI}}
        \addplot[brown, mark = triangle, mark size=2pt] table [x = {n}, y = {Augmentation1}] {\pltA};
        \addlegendentry{\textsc{RAugm}}
        \addplot[orange, mark = square*, mark size=1.5pt] table [x = {n}, y = {NodeInDegMinComp0}] {\pltA};
        \addlegendentry{\textsc{NInDegMinComp}}
        \addplot[magenta, mark = diamond, mark size=2pt] table [x = {n}, y = {Transp0}] {\pltA};
        \addlegendentry{\textsc{Transp}}
        \addplot[green, mark = oplus, mark size=1.75pt] table [x = {n}, y = {Basic1}] {\pltA};
        \addlegendentry{\textsc{Basic}}
        \addplot[blue, mark = diamond*, mark size=2pt] table [x = {n}, y = {PForestsBFS21}] {\pltA};
        \addlegendentry{\textsc{PForestsBFS}}
      \end{axis}
    \end{tikzpicture}
    \hspace{5mm}
  }
  \caption{Molecular and protein graphs.
    The \textsc{PForestsBFS} heuristic is the most efficient.
    The running time of \textsc{KINARI} was $0.6$ seconds for protein graphs with 535 nodes, so it is omitted from the figure.}
\end{figure}

\section{Conclusion}\label{sec:conclusion}

We presented an efficient implementation for extracting a maximum-size $(k, \ell)$-sparse subgraph, building on the classical augmenting path method and enhancing it with a range of heuristics to optimize the processing order of edges.
For the special case $\ell = 2k$, we proposed an improved algorithm that computes an inclusion-wise maximal $(k, 2k)$-sparse subgraph in $O(nm)$ time.

Our experimental results show that the implementation consistently outperforms existing tools, often by several orders of magnitude, across a variety of graph families.
The evaluation highlights the impact of edge-ordering heuristics, with \textsc{Transp} and \textsc{TranspOne} performing best on synthetic instances, while \textsc{PForestsBFS} proves especially effective on real-world molecular and protein graphs.
These results clearly demonstrate the substantial practical benefits of preprocessing and ordering strategies.

The implementation is publicly available and proposed for inclusion in the LEMON graph library, offering a robust, scalable, and easy-to-use tool for applications in rigidity theory, combinatorial optimization, structural biology, robotics, and CAD.

\medskip
While our heuristics demonstrate strong empirical performance across a wide range of graph classes, a formal analysis of their worst-case behavior remains an open challenge --- though their theoretical running time is no worse than that of the original augmenting path method.
Future work could also explore heuristic strategies for the weighted case, where flexibility in edge-processing order is limited but freedom in choosing the orientation of accepted edges is retained.

\section*{Acknowledgements}

The author thanks L\'or\'ant Mat\'uz for his contributions to the early experimental setup and for his assistance with benchmarking and evaluations during the initial stages of this work.
The author is also grateful to Tibor Jord\'an, Csaba Kir\'aly, and Andr\'as Mih\'alyk\'o for valuable discussions and for pointing out relevant literature.

This research has been implemented with the support provided by the Ministry of Innovation and Technology of Hungary from the National Research, Development and Innovation Fund, financed under the ELTE TKP 2021-NKTA-62 funding scheme, by the Ministry of Innovation and Technology NRDI Office within the framework of the Artificial Intelligence National Laboratory Program, by the Lend\"ulet Programme of the Hungarian Academy of Sciences --- grant number LP2021-1/2021.

\bibliographystyle{plain}
\bibliography{bibliography}

\end{document}